\theoremstyle{plain}
\newtheorem{theorem}{Theorem}
\newtheorem{lemma}{Lemma}
 \newtheorem{asmptn}{Assumption}
 \theoremstyle{definition}
 \newtheorem{definition}{Definition}
 \newtheorem{example}{Example}
\def\bstr{b}
\def\bfstr{bf}
\def\cstr{c}
\def\fstr{f}
\def\lst{A,B,C,D,d,E,F,G,H,I,J,K,L,M,N,O,P,Q,R,S,T,U,V,W,X,Y,Z}
\newcommand{\MkB}[1]{\expandafter\def\csname\bstr#1\endcsname{\mathbb{#1}}}
\lst\do{%
    \expandafter\MkB \i     }
\newcommand{\MkBF}[1]{\expandafter\def\csname\bfstr#1\endcsname{\mathbf{#1}}}
\lst\do{%
    \expandafter\MkBF \i     }
\newcommand{\MkCal}[1]{\expandafter\def\csname\cstr#1\endcsname{\mathcal{#1}}}
\lst\do{%
    \expandafter\MkCal \i     }
\newcommand{\MkFrak}[1]{\expandafter\def\csname\fstr#1\endcsname{\mathfrak{#1}}}
\lst\do{%
    \expandafter\MkFrak \i     }
\newcommand{\pB}[1]{\mathsf{PB}(#1)}
\newcommand{\pO}[1]{\mathsf{PO}(#1)}
\newcommand{\inObj}{\varnothing}
\newcommand{\mono}[1]{\mathsf{mono}(#1)}
\newcommand{\mor}[1]{\mathsf{mor}(#1)}
\newcommand{\obj}[1]{\mathsf{obj}(#1)}
\newcommand{\Lin}[1]{\mathsf{Lin}(#1)}
\newcommand{\sqMatch}[2]{\mathsf{M}^{sq}_{#1}(#2)}
\newcommand{\sqRMatch}[2]{\mathbf{M}^{sq}_{#1}(#2)}
\newcommand{\cSquare}[1]{\Box(#1)}
\newcommand{\sqComp}[3]{#1 \stackrel{#2}{\sphericalangle} #3}
\newcommand{\grule}[3]{#1\stackrel{#2}{\Leftarrow}#3}
\newcommand{\bra}[1]{\left\langle #1\right\vert}
\newcommand{\ket}[1]{\left\vert #1\right\rangle}
\newcommand{\braket}[2]{\left\langle \left. #1 \right\vert #2\right\rangle}
\gdef\tpScale{0.7}
\newcommand{\tP}[1]{\ensuremath{\vcenter{\hbox{\begin{tikzpicture}[transform shape, scale=\tpScale]\normalsize#1\end{tikzpicture}}}}}
\newcommand{\tPL}[1]{\begin{tikzpicture}[transform shape, scale=\tpScale]\normalsize#1\end{tikzpicture}}
\gdef\lblSEP{0.2} 
\colorlet{grey}{black!20!white}
\colorlet{h1color}{blue!40!black} 
\colorlet{h2color}{orange!90!black} 
\colorlet{h3color}{blue!40!white} 
\colorlet{h4color}{green!40!black} 
\newcommand{\OneVertG}[1][]{%
\ifthenelse{\equal{#1}{}}{%
\tP{%
\node[vertices] (a) at (1,1) {};}}{%
\tP{%
\node[vertices,#1] (a) at (1,1) {};}}}
\newcommand{\TwoVertG}[1][]{%
\ifthenelse{\equal{#1}{}}{%
\tP{%
\node[vertices] (a) at (1,1) {};
\node[vertices] (b) at (1.7,1) {};}}{%
\tP{%
\node[vertices,#1] (a) at (1,1) {};
\node[vertices,#1] (b) at (1.7,1) {};}}}
\newcommand{\OneVertGL}[2][]{%
\ifthenelse{\equal{#1}{}}{
\raisebox{-0.3em}{\tPL{
\node[vertices] (a) at (1,1) {};
\node at ($(a.south) -(0,\lblSEP)$) {{\tiny $#2$}};}}}{%
\raisebox{-0.3em}{\tPL{%
\node[vertices,#1] (a) at (1,1) {#2};
\node[#1] at ($(a.south) -(0,\lblSEP)$) {{\tiny $#2$}};}}}}
\newcommand{\TwoVertGL}[3][]{%
\ifthenelse{\equal{#1}{}}{%
\tP{%
\node[vertices] (a) at (1,1) {#2};
\node[vertices] (b) at (1.5,1) {#3};}}{%
\tP{%
\node[vertices,#1] (a) at (1,1) {#2};
\node[vertices,#1] (b) at (1.5,1) {#3};}}}
\newcommand{\TwoVertEdgeG}[1][]{%
\ifthenelse{\equal{#1}{}}{\tP{%
\node[vertices] (a) at (1,1) {};
\node[vertices] (b) at (1.7,1) {};
\draw (a) edge (b);}}{\tP{%
\node[vertices,#1] (a) at (1,1) {};
\node[vertices,#1] (b) at (1.7,1) {};
\draw (a) edge[#1] (b);}}}
\newcommand{\TwoVertDirEdgeG}[1][]{%
\ifthenelse{\equal{#1}{}}{\tP{%
\node[vertices] (a) at (1,1) {};
\node[vertices] (b) at (1.7,1) {};
\draw (a) edge[dirEdge] (b);}}{\tP{%
\node[vertices,#1] (a) at (1,1) {};
\node[vertices,#1] (b) at (1.7,1) {};
\draw (a) edge[dirEdge,#1] (b);}}}
\newcommand{\TwoVertDirEdgeGLb}[3]{%
\raisebox{-0.3em}{\tPL{%
\node[vertices,#1] (a) at (1,1) {};
\node[vertices,#1] (b) at (1.7,1) {};
\node at ($(a.south) -(0,\lblSEP)$) {{\tiny $#2$}};
\node at ($(b.south) -(0,\lblSEP)$) {{\tiny $#3$}};
\draw (a) edge[dirEdge,#1] (b);}}}
\gdef\mycdScale{1}
\def\temp{&} \catcode`&=\active \let&=\temp
 \title{Sesqui-Pushout Rewriting:\\ Concurrency, Associativity and Rule Algebra Framework\thanks{This project has received funding from the European Union's Horizon 2020 research and innovation programme under the Marie Sk\l{}odowska-Curie grant agreement No~753750.}}
\author{Nicolas Behr
\institute{Universit{\'{e}} de Paris, IRIF, CNRS\\
F-75013 Paris, France}
\email{nicolas.behr@irif.fr}%
}
\begin{document}
\maketitle              
\begin{abstract}
Sesqui-pushout (SqPO) rewriting is a variant of transformations of graph-like and other types of structures that fit into the framework of adhesive categories where deletion in unknown context may be implemented. We provide the first account of a concurrency theorem for this important type of rewriting, and we demonstrate the additional mathematical property of a form of associativity for these theories. Associativity may then be exploited to construct so-called rule algebras (of SqPO type), based upon which in particular a universal framework of continuous-time Markov chains for stochastic SqPO rewriting systems may be realized.
\end{abstract}
\section{Motivation and relation to previous work}

The framework of \emph{Sesqui-Pushout (SqPO) rewriting} has been introduced relatively recently in~\cite{Corradini_2006} as a novel alternative to the pre-existing algebraic graph transformation frameworks known as \emph{Double-Pushout (DPO)}~\cite{Ehrig1973,Ehrig1991,DBLP:conf/gg/CorradiniMREHL97,lack2005adhesive} and \emph{Single-Pushout (SPO) rewriting}~\cite{Kennaway,Loewe_1993,Loewe_2014}. In the setting of the rewriting of graph-like structures, the distinguishing feature of the aforementioned DPO-type rewriting is that the deletion of vertices with incident edges is only possible if the incident edges are explicitly deleted via the application of the rewriting rule. In contrast, in both the SqPO and the SPO rewriting setups, \emph{``deletion in unknown context''} is implementable. Thus for practical applications of rewriting, in particular in view of the modeling of stochastic rewriting systems, the S(q)PO rewriting semantics provide an important additional option for the practitioners, and will thus in particular complement the existing DPO-type associative rewriting and rule algebra framework as introduced in~\cite{bdg2016,bp2018}. Referring the interested readers to~\cite{Loewe_2015} for a recent review and further conceptual details of the three approaches, suffice it here to quote that SqPO and SPO rewriting via linear rules\footnote{While non-linear rules in SqPO rewriting have interesting applications in their own right (permitting e.g.\ the cloning and fusing of vertices in graphs), this most general case is left for future work.} (defined as monic spans) and along monomorphic matches effectively encode the same semantics of rewriting. We chose (by the preceding argument without loss of expressivity) to develop the theory of associative rewriting within the SqPO rather than the SPO setting, since the SqPO framework bears certain close technical similarities to the DPO rewriting framework, which proved crucial in finding a strategy for the highly intricate proofs of the concurrency and associativity theorems presented in this paper.  While it is well-known (see e.g.\ Section~5.1 of~\cite{Corradini_2006}) that DPO- and SqPO-type semantics coincide for certain special classes of linear rules (essentially rules that do not delete vertices), and while these cases might provide some valuable cross-checks of technical results to the experts, SqPO-type semantics is in its full generality a considerably more intricate variant of semantics due to its inherent ``mixing'' of pushouts with final pullback complements. It should further be noted that we must impose a set of additional assumptions on the underlying adhesive categories (see Assumption~\ref{ass:SqPO}) in order to ensure certain technical properties necessary for our concurrency and associativity theorems to hold. To the best of our knowledge, apart from some partial results in the direction of developing a concurrency theorem for SqPO-type rewriting in~\cite{Corradini_2006,Loewe_2015,Corradini2018}, prior to this work neither of the aforementioned theorems had been available in the SqPO framework.\\

Associativity of SqPO rewriting theories plays a pivotal role in our  development of a novel form of concurrent semantics for these theories, the so-called \emph{SqPO-type rule algebras}. Previous work on associative DPO-type rewriting theories~\cite{bdg2016,bdgh2016,bp2018} (see also~\cite{bp2019-ext}) has led to a category-theoretical understanding of associativity that may be suitably extended to the SqPO setting. In contrast to the traditional and well-established formalisms of concurrency theory for rewriting systems (see e.g.~\cite{DBLP:conf/gg/1997handbook,ehrig2004adhesive,EHRIG:2014ma,Corradini2018} for DPO-type semantics and~\cite{Corradini_2006,Corradini2018} for a notion of parallel independence and a Local Church-Rosser theorem for SqPO-rewriting of graphs), wherein the focus of the analysis is mostly on \emph{derivation traces} and their sequential independence and parallelism properties, the focus of our rule-algebraic approach differs significantly: we propose instead to put \emph{sequential compositions of linear rules} at the center of the analysis (rather than the derivation traces), and moreover to employ a vector-space based semantics in order to encode the non-determinism of such rule compositions. It is for this reason that the concurrency theorem plays a quintessential role in our rule algebra framework, in that it encodes the relationship between sequential compositions of linear rules and derivation traces, which in turn gives rise to the so-called \emph{canonical representations} of the rule algebras (see Section~\ref{sec:ACDrd}). This approach in particular permits to uncover certain combinatorial properties of rewriting systems that would otherwise not be accessible. While undoubtedly not a standard technique in the realm of theoretical computer science, certain special examples of rule algebras are ubiquitous in many areas of applied mathematics and theoretical physics. The most famous such example concerns the so-called \emph{Heisenberg-Weyl algebra} (see e.g.\ \cite{blasiak2005boson,blasiak2010combinatorial,blasiak2011combinatorial}), which is well-known to possess a representation in terms of the formal multiplication operator $\hat{x}$ and the differentiation operator $\partial_x$ on formal power series in the formal variable $x$, with $\hat{x}\,x^n:=x^{n+1}$ and $\partial_x$ acting as the derivative. Referring the interested readers to Example~\ref{ex:repInt} (see also~\cite{bp2018,bdg2019}) for the precise details, it transpires that the monomials $x^n$ (for $n$ a non-negative integer) are found to be in one-to-one correspondence with \emph{graph states} associated to $n$-vertex discrete graphs, while $\hat{x}$ and $\partial_x$ may be understood as the \emph{canonical representations} of the discrete graph rewriting rules of creation and deletion of vertices. It will thus come as no surprise that considering more general rewriting rules than those of discrete graphs will lead to a very substantial generalization of these traditional results and techniques.\\

From the very beginning of the development of the rule algebra framework~\cite{bdg2016}, one of our main motivations has been the study of stochastic rewriting systems, whence of continuous-time Markov chains (CTMCs) based upon DPO- or SqPO-type rewriting rules. While previously in particular applications of stochastic SqPO-type rewriting systems have played a role predominantly in highly specialized settings such as e.g.\ the formulation of the biochemical reaction system framework known as \emph{\textsc{Kappa}}~\cite{Danos_2010,danos2004formal,danos2008rule,danos2012graphs}, our novel approach of formulating such systems in terms of associative unital rule algebras may very well open this versatile modeling technique to many other areas of applied research. In conjunction with our previously developed DPO-type framework in~\cite{bp2018}, one could argue that our \emph{stochastic mechanics frameworks} are in a certain sense a \emph{universal construction}, in that once a semantics for associative unital rewriting is provided, the steps necessary to obtain the associated CTMCs are clearly formalized. It is interesting to compare the traditional approaches to stochastic rewriting systems with CTMC semantics such as~\cite{Heckel_2004,Heckel2012} in the DPO- and~\cite{Heckel2005} in the SPO-settings to our present reformulation in terms of rule algebras. The former approaches yet again tend to focus on derivation traces of stochastic rewriting systems, while our rule-algebraic approach aims to extract dynamical information from stochastic rewriting systems via analysis of certain combinatorial relationships (so-called nested commutators) of the infinitesimal generator of the CTMC with the (pattern-counting) observables of the system. It is via these relations that one may in certain cases obtain \emph{exact closed-form solutions} for such dynamical data (see e.g.\ Section~\ref{sec:appEx}). It would nevertheless be an intriguing avenue for future research to understand better the finer points of the ``traditional'' stochastic rewriting frameworks (which also feature sophisticated developments in terms of probabilistic model-checking and various types of stochastic logics), and furthermore whether or not rule-algebraic techniques might be of interest also in more general stochastic rewriting semantics such as probabilistic (timed) graph transformations~\cite{Krause2012,Maximova2018}.\\

\noindent\textbf{Structure of the paper:}~In Section~\ref{sec:Adh}, some category-theoretical background material is provided. The key results of associativity and concurrency of SqPO rewriting are presented in Section~\ref{ec:SqPO}, followed by the construction of SqPO-type rule algebras in Section~\ref{sec:ACDrd}. The second part of the paper contains the stochastic mechanics framework (Section~\ref{sec:SM}) as well as a practical application example (Section~\ref{sec:appEx}). Technical proofs are situated in the Appendix.

\section{Background: adhesive categories and final pullback complements}
\label{sec:Adh}

We recall some of the elementary definitions and properties related to the notions of adhesive categories, upon which our framework will rely. 

\begin{definition}[\cite{lack2005adhesive}]\label{def:adhCats}
A category $\bfC$ is said to be \textbf{\emph{adhesive}} if 
\begin{enumerate}
\item $\bfC$ has pushouts along monomorphisms,
\item $\bfC$ has pullbacks,
\item pushouts along monomorphisms are van Kampen (VK) squares.
\end{enumerate}
The last property entails that in a commutative cube as in~\eqref{eq:diags} on the left where the bottom square is a pushout, this square is a VK square if and only if whenever the back and right vertical faces are pullbacks, then the top square is a pushout if and only if the front and left vertical squares are pullbacks.
\end{definition}

\begin{equation}\label{eq:diags}
\vcenter{\hbox{\includegraphics[scale=0.7]{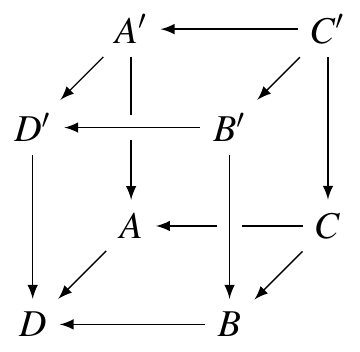}}}\qquad \quad
\vcenter{\hbox{\includegraphics[scale=0.8]{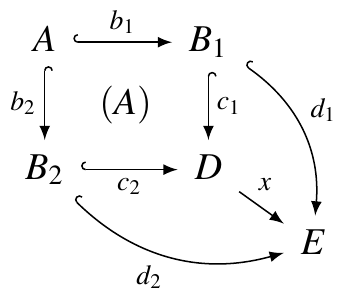}}}\qquad \quad \vcenter{\hbox{\includegraphics[scale=0.8]{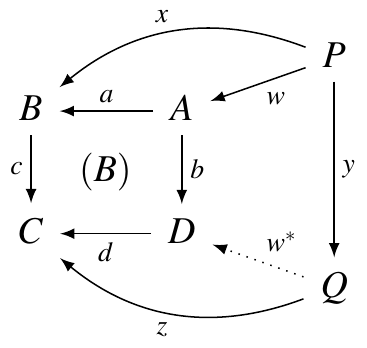}}}
\end{equation}
We will be exclusively interested in categories that satisfy certain finiteness properties (in order to ensure finiteness of the sets of matches for rule applications and compositions, see Section~\ref{ec:SqPO}):
\begin{definition}[Finitary categories~\cite{Braatz:2010aa}]
A category $\bfC$ is said to be \emph{finitary} if every object $X\in \obj{\bfC}$ has only finitely many subobjects (i.e.\ if there only exist finitely many monomorphisms $Y\rightarrow X$ up to isomorphism for every $X\in \obj{\bfC}$). For every adhesive category $\bfC$, the restriction to finite objects of $\bfC$ defines a full subcategory $\bfC_{fin}$ called the \emph{finitary restriction} of $\bfC$.
\end{definition}
\begin{theorem}[Finitary restrictions; \cite{Braatz:2010aa}, Thm.~4.6]
The \emph{finitary restriction} $\bfC_{fin}$ of any adhesive category $\bfC$ is a \emph{finitary adhesive category}.
\end{theorem}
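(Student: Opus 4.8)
The plan is to reduce the statement to three closure properties of the class of finite objects inside $\bfC$ and then to transfer the adhesive axioms across the full inclusion $\bfC_{fin}\hookrightarrow\bfC$. Writing $\mathrm{Sub}(X)$ for the set of subobjects of $X\in\obj{\bfC}$ (finite precisely when $X$ is finite), I would first establish that the finite objects are closed under (a) passage to subobjects, (b) pushouts along monomorphisms, and (c) pullbacks. Granting these, every span (resp.\ cospan) in $\bfC_{fin}$ with the appropriate mono leg has its pushout (resp.\ pullback) computed in $\bfC$ landing again in $\bfC_{fin}$; since the inclusion is full, the relevant universal property, which already holds against all objects of $\bfC\supseteq\bfC_{fin}$, in particular holds against all objects of $\bfC_{fin}$, so these (co)limits exist in $\bfC_{fin}$ and are created by the inclusion. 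The van Kampen condition then transfers verbatim: a commutative cube in $\bfC_{fin}$ is a cube in $\bfC$, its faces are pushouts/pullbacks in $\bfC_{fin}$ exactly when they are so in $\bfC$, and the bottom pushout is VK in $\bfC$ by hypothesis. The finitary axiom for $\bfC_{fin}$ then comes for free, since by fullness the subobjects of an object of $\bfC_{fin}$ are exactly its subobjects in $\bfC$. Property (a) itself is immediate: for a subobject $Y\rightarrowtail X$ with $X$ finite, post-composition embeds $\mathrm{Sub}(Y)$ into $\mathrm{Sub}(X)$, so $Y$ is finite.

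The heart of the argument is property (b), the bounding of $\mathrm{Sub}(D)$ for a pushout $D$ of a span $B\hookleftarrow A\to C$ along a monomorphism $A\rightarrowtail B$. A naive bound fails here, since a subobject of $D$ is not merely a pair of subobjects of $B$ and $C$; the correct bound is supplied by the van Kampen property. Given a monomorphism $m\colon M\rightarrowtail D$, I would form the pullbacks $M_B=B\times_D M$, $M_C=C\times_D M$ and $M_A=A\times_D M$, each a subobject of $B$, $C$ and $A$ respectively, because $m$ is monic and pullbacks preserve monomorphisms. Assembling these into a cube over the pushout square, the pullback-pasting lemma shows that all four vertical faces are pullbacks (e.g.\ $A\times_B M_B = A\times_D M = M_A$). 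Since the bottom face is a pushout along a monomorphism, hence a van Kampen square, the top face is a pushout, i.e.\ $M\cong M_B +_{M_A} M_C$. Consequently $M$ is determined up to isomorphism by the span $M_B\hookleftarrow M_A\to M_C$; as $M_A$ is recovered as $M_B\times_B A$ and the gluing map $M_A\to M_C$ is the unique factorisation of $M_A\rightarrowtail A\to C$ through the monomorphism $M_C\rightarrowtail C$, the assignment $M\mapsto(M_B,M_C)$ is an injection $\mathrm{Sub}(D)\hookrightarrow\mathrm{Sub}(B)\times\mathrm{Sub}(C)$. With $B$ and $C$ finite the codomain is finite, so $D$ is finite.

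For property (c), the case of a pullback along a monomorphism is immediate: if $B\rightarrowtail D$ is monic then the projection $A\to C$ of the pullback $A=B\times_D C$ is again monic, exhibiting $A$ as a subobject of the finite object $C$; this already covers all the pullbacks entering the van Kampen cubes and the rewriting constructions of Section~\ref{ec:SqPO}. The genuinely delicate point, and the step I expect to be the main obstacle, is closure under \emph{arbitrary} pullbacks, where neither projection need be monic and the clean reconstruction of (b) is unavailable, the base square now being a pullback rather than a pushout so that the van Kampen hypothesis cannot be invoked directly. I would attack this by factoring each leg of the cospan and reducing, via pullback pasting, to a pullback along a monomorphism (handled above) composed with a pullback along the residual non-monic part, the latter being controlled through the finitary structure of the objects involved; this is precisely the technical core carried out in~\cite{Braatz:2010aa}.
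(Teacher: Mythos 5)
Your parts (a), (b), and the mono-leg case of (c) are correct, and your cube argument for (b) --- pulling a subobject $M\rightarrowtail D$ back along the pushout square, checking all four vertical faces are pullbacks by pasting, invoking the van Kampen property to get $M\cong M_B+_{M_A}M_C$, and extracting the injection $\mathrm{Sub}(D)\hookrightarrow\mathrm{Sub}(B)\times\mathrm{Sub}(C)$ --- is exactly the key lemma of the source the paper cites. Note, however, that the paper itself offers no proof of this theorem: it is quoted verbatim from~\cite{Braatz:2010aa}, Thm.~4.6, so there is no in-paper argument to compare against, and in that source the result is formulated for $\cM$-adhesive categories, whose axioms demand pushouts and pullbacks \emph{along $\cM$-morphisms only}. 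Under those weaker axioms your completed steps already constitute the whole proof, since the only pullbacks one must close under are the ones your mono-leg observation handles.

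The genuine gap is your treatment of arbitrary pullbacks, which the theorem as restated in this paper does require, because Definition~\ref{def:adhCats} asks that an adhesive category have \emph{all} pullbacks, hence $\bfC_{fin}$ must too. You correctly identify this as the delicate point but then do two things that do not hold up: first, you defer the step to ``the technical core carried out in~\cite{Braatz:2010aa}'', which is circular for a blind proof and, worse, not actually there --- that paper sidesteps the issue via the $\cM$-adhesive axioms rather than resolving it; second, your proposed repair, factoring each leg of the cospan into a monomorphism and a ``residual non-monic part'', presupposes some (strong) epi--mono factorization structure that a bare adhesive category is not guaranteed to possess, so the reduction via pullback pasting cannot even get started. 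Without products (which adhesive categories also need not have) there is no evident embedding of $B\times_D C$ into a finite object either: a subobject $S\rightarrowtail B\times_D C$ corresponds only to a \emph{jointly} monic span over $B$ and $C$, and joint monicity does not bound $S$ by subobjects of $B$ or $C$ individually, so your counting strategy from (b) genuinely breaks. A complete proof would have to either establish closure of finite objects under arbitrary pullbacks by some new argument, or observe --- as would suffice for every construction in Section~\ref{ec:SqPO} of this paper --- that the statement should be read in the $\cM$-adhesive sense, where only pullbacks along monomorphisms are needed.
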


Adhesive categories have been introduced and advocated in~\cite{lack2005adhesive} as a framework for rewriting due to their numerous useful properties, some of which are listed in Appendix~\ref{app:lemList} for the reader's convenience. One of the central concepts in the theory of SqPO rewriting is the following:
\begin{definition}[Final Pullback Complement (FPC); \cite{Corradini_2006,Loewe_2015}]
	Let $\bfC$ be a category. Given a commutative diagram as in~\eqref{eq:diags} on the right, a pair of morphisms $(d,b)$ is a \emph{final pullback complement (FPC)} of a pair $(c,a)$ if (i) $(a,b)$ is a pullback of $(c,d)$ (i.e.\ if the square marked $(B)$ is a pullback square), and (ii) for each collection of morphisms $(x, y, z, w)$ as in~\eqref{eq:diags} on the right, where $(x,y)$ is pullback of $(c, z)$ and where $a\circ w=x$, there exists a unique morphism $w^{*}$ with $d\circ w^{*}=z$ and $w^{*}\circ y=b\circ w$.
\end{definition}
 
For our associative rewriting framework, it will be crucial to work with a category in which (i) FPCs are guaranteed to exist when constructing them for composable pairs of monomorphisms, and (ii) monomorphisms are stable under FPCs, i.e.\ FPCs of pairs of monomorphisms are given by pairs of monomorphisms. This property is satisfied by adhesive categories (cf.\ Lemma~\ref{lem:FPCfacts} of Appendix~\ref{app:lemList}), yet to the best of our knowledge the question of which more general types of categories possess this property has not yet been investigated to quite the level of generality as analogous classification problems in the case of DPO rewriting, even though there does exist a large body of work on classes of categories that admit SqPO constructions~\cite{Corradini_2006,lowe2015single,Loewe_2015,Corradini_2015,Loewe_2018}. Within these classes, according to~\cite{Cockett_2003,Corradini_2015} guarantees for the existence of FPCs may be provided for categories that possess a so-called $\cM$-partial map classifier. However, it appears to be an open question of whether the statement of Lemma~\ref{lem:FPCfacts} on stability of monomorphisms under FPCs may be generalized to the setting of $\cM$-adhesive categories, where $\cM$ is a class of monomorphisms. Relaying such questions to future work, we refer to Lemma~\ref{lem:GraphFPC} of Appendix~\ref{app:lemList} for a well-known instantiation of a suitable categorical setting from the SqPO literature in the form of the category $\mathbf{FinGraph}$ of finite directed multigraphs, which  also serves to illustrate the FPC construction.

\begin{asmptn}\label{ass:SqPO}
	$\bfC$ is an adhesive category in which all FPCs along monomorphisms exist, and in which monomorphisms are stable under FPCs.
\end{asmptn}

\section{Sesqui-Pushout rewriting}
\label{ec:SqPO}

We will now develop a framework for \emph{Sesqui-Pushout (SqPO) rewriting} in the setting of a category $\bfC$ satisfying Assumption~\ref{ass:SqPO}, in close analogy to the framework of associative Double-Pushout (DPO) rewriting as introduced in~\cite{bp2018,bp2019-ext}. Unlike in the general setting of SqPO rewriting, we will thus be able to not only prove a \emph{concurrency theorem} (Section~\ref{sec:SqPOconc}), but also an \emph{associativity property} of the SqPO-type rule composition (Section~\ref{sec:SqPOassoc}).

\subsection{Concurrent composition and concurrency theorem}\label{sec:SqPOconc}

For reasons that will become more transparent when introducing the SqPO-type rule algebra framework starting from Section~\ref{sec:ACDrd}, we opt for a non-standard convention of reading spans of monomorphisms ``from right to left'' (rather than the traditional ``left to right''), which is why we will speak of ``input'' and ``output'' of rules rather than ``left-'' and ``right hand sides'' to avoid confusion.

\begin{definition}[SqPO-type rewriting; compare \cite{Corradini_2006}, Def.~4]
\label{def:SqPOr}
Let $\bfC$ be an adhesive category satisfying Assumption~\ref{ass:SqPO}. Denote by $\Lin{\bfC}$ the set of (isomorphism classes\footnote{Two productions $O\leftarrow K\rightarrow I$ and $O'\leftarrow K'\rightarrow I'$ are defined to be isomorphic if there exist isomorphisms $I\rightarrow I'$, $K\rightarrow K'$ and $O\rightarrow O'$ that make the obvious diagram commute; we will not distinguish between isomorphic productions. As natural in this category-theoretical setting, the constructions presented in the following are understood as defined up to such isomorphisms.} of) so-called \emph{linear productions}, defined as the set of spans of monomorphisms,
\begin{equation}
\Lin{\bfC}:=\{p\equiv (O\xleftarrow{o}K\xrightarrow{i}I)\mid o,i\in \mono{\bfC}\}\diagup_{\cong}\,.
\end{equation}
Given an object $X\in \obj{\bfC}$ and a linear production $p\in \Lin{\bfC}$, we denote the \emph{set of SqPO-admissible matches} $\sqMatch{p}{X}$ as the set of monomorphisms $m:I\rightarrow X$. Then the diagram below is constructed by taking the \emph{final pullback complement} marked $\mathsf{FPC}$ followed by taking the pushout marked $\mathsf{PO}$:
\begin{equation}\label{eq:DPOr}\gdef\mycdScale{0.85}
\begin{mycd}
O \ar[d,"{m^{*}}"'] & 
 K \ar[l,"o"']\ar[r,"i"]\ar[d,"k"']
		\ar[dl,phantom, "{\mathsf{PO}}"]\ar[dr,phantom,"{\mathsf{FPC}}"] &
 I \ar[d,"m"]
 \\
 {X'} & {\overline{K}} \ar[l,"o'"]\ar[r,"i'"'] & X\\
\end{mycd}
\end{equation}
We write $p_m(X):=X'$ for the object ``produced'' by the above diagram. The process is called \emph{(SqPO-) derivation} of $X$ along production $p$ and admissible match $m$, and denoted $p_m(X)\xLeftarrow[p,m]{{\tiny SqPO}} X$.
\end{definition}

Next, a notion of sequential composition of productions is introduced:

\begin{definition}[SqPO-type concurrent composition]\label{def:SqPOcomp}
	Let $p_1,p_2\in \Lin{\bfC}$ be two linear productions. Then an overlap of the output object $O_1$ of $p_1$ with the input object $I_2$ of $p_2$, encoded as a span 
	\[
		{\color{h1color}\mathbf{m}}=(I_2{\color{h1color}\xleftarrow{m_2} M_{21}\xrightarrow{m_1}}O_1)
	\] 
	with $m_1,m_2\in \mono{\bfC}$, is called an \emph{SqPO-admissible match of $p_2$ into $p_1$}, denoted $\mathbf{m}\in \sqMatch{p_2}{p_1}$, if the square marked $\mathsf{POC}$ in~\eqref{eq:SqPOccomp} is constructible as a pushout complement (with the cospan $I_2\xrightarrow{n_2}N_{21}\xleftarrow{n_1}O_1$ obtained by taking the pushout marked ${\color{h1color}\mathsf{PO}}$). In this case, the remaining parts of the diagram are formed by taking the final pullback complement marked $\mathsf{FPC}$ and the pushouts marked $\mathsf{PO}$:
	\begin{equation}\label{eq:SqPOccomp}
		\begin{mycd}
			O_2\ar[d,"n_2^{*}"'] & 
			K_2 \ar[l,"o_2"']\ar[r,"i_2"]\ar[d,"k_2"']
				\ar[dl,phantom,"\mathsf{PO}"] & 
			I_2 \ar[dr,h1color,bend right,"n_2"]\ar[dl,phantom,"\mathsf{FPC}"] & 
			{\color{h1color}M_{21}}
				\ar[l,h1color,"m_2"']\ar[r,h1color,"m_1"]\ar[d,h1color,phantom,"\mathsf{PO}"] & 
			O_1 \ar[dl,h1color,bend left,"n_1"']\ar[dr,phantom,"\mathsf{POC}"]&
			K_1 \ar[l,"o_1"']\ar[r,"i_1"]\ar[d,"k_1"]
				\ar[dr,phantom,"\mathsf{PO}"] & 
			I_1\ar[d,"n_1^{*}"]\\
			{\color{h2color}O_{21}} & 
			\overline{K}_2\ar[l,"o_2'"']\ar[rr,"i_2'"] & & 
			{\color{h1color}N_{21}} & &
			\overline{K}_1\ar[ll,"o_1'"']\ar[r,"i_1'"] & {\color{h2color}I_{21}}\\
			&&& {\color{h2color} K_{21}}
				\ar[u,phantom,h2color,"\mathsf{PB}"]
				\ar[ull,bend left=10,dotted,h2color,"i_2''"']
				\ar[ulll,bend left=10,h2color,near end,"o_{21}=o_2'\circ i_2''"]
				\ar[urr,bend right=10,dotted,h2color,"o_1''"]
				\ar[urrr,bend right=10,h2color,near end,"i_{21}=o_1''\circ i_1'"'] &&&
		\end{mycd}
	\end{equation}
	If $\mathbf{m}\in \sqMatch{p_2}{p_1}$, we write $\sqComp{p_2}{\mathbf{m}}{p_1}\in \Lin{\bfC}$ for the \emph{composite} of $p_2$ with $p_1$ along the admissible match $\mathbf{m}$, defined as
	\begin{equation}\label{eq:defSqPOcomp}
	\begin{aligned}
		\sqComp{p_2}{\mathbf{m}}{p_1}&\equiv {\color{h2color}(O_{21}\xleftarrow{o_{21}}K_{21}\xrightarrow{i_{21}}I_{21})}
		\,.
	\end{aligned}
	\end{equation}
\end{definition}
Due to stability of monomorphisms under pushouts, pullbacks and FPCs in the setting of a category satisfying Assumption~\ref{ass:SqPO}, all morphisms in Definitions~\ref{def:SqPOr} and~\ref{def:SqPOcomp} are guaranteed to be monomorphisms, whence in particular the span $\sqComp{p_2}{\mathbf{m}}{p_1}$ is a span of monomorphisms and thus indeed an element of $\Lin{\bfC}$.\\

At first sight, it might appear irritating that in the definition of the SqPO-type rule composition, the right hand part of~\eqref{eq:SqPOccomp} involves a \emph{pushout complement} (marked $\mathsf{POC}$), while the left hand part of the diagram in~\eqref{eq:SqPOccomp} features a \emph{final pullback complement} (marked $\mathsf{FPC}$). Intuitively, considering the case of graph rewriting for concreteness, in a given sequential application of two productions, while the application of the first production may lead to implicit edge deletions, the second production is incapable of having any causal interaction with edges deleted by the first production. In contrast, the second production may in a given sequential application very well implicitly delete edges present in the output object of the first production, which explains the presence of the FPC in the defining equation~\eqref{eq:SqPOccomp}. We refer the interested readers to~\cite{bdgh2016} for further intuitions attainable in terms of so-called rule diagrams for presenting rule compositions. The justification for Definition~\ref{def:SqPOcomp} in the general case is provided via the following \emph{concurrency theorem}. Even though at least in certain specialized settings the ``synthesis'' part of this theorem has been foreseen already in~\cite{Loewe_2015} (where it is also commented that a full concurrency theorem for SqPO rewriting might be attainable), the following result appears to be new.

\begin{theorem}[SqPO-type Concurrency Theorem]\label{thm:SqPOconcur}
Let $\bfC$ be an adhesive category satisfying Assumption~\ref{ass:SqPO}. Let $p_1,p_2\in \Lin{\bfC}$ be two linear rules and $X_0\in ob(\bfC)$ an object.
\begin{itemize}
\item \textbf{Synthesis:} Given a two-step sequence of SqPO derivations 
\[
X_2\xLeftarrow[p_2,m_2]{{\tiny SqPO}} X_1\xLeftarrow[p_1,m_1]{{\tiny SqPO}}X_0\,,
\]
with $X_1:=p_{1_{m_1}}(X_0)$ and $X_2:=p_{2_{m_2}}(X_1)$, there exists a SqPO-composite rule $q=\sqComp{p_2}{\mathbf{n}}{p_1}$
for a unique $\mathbf{n}\in \sqRMatch{p_2}{p_1}$,
 and a unique SqPO-admissible match $n\in \sqMatch{q}{X}$, such that 
 \[
 	q_n(X)\xLeftarrow[q,n]{{\tiny SqPO}} X_0\qquad \text{and}\qquad q_n(X_0)\cong X_2\,.
 \]
\item \textbf{Analysis:} Given an SqPO-admissible match $\mathbf{n}\in \sqRMatch{p_2}{p_1}$ of $p_2$ into $p_1$ and an SqPO-admissible match $n\in \sqMatch{q}{X}$ of the SqPO-composite $q=\sqComp{p_2}{\mathbf{n}}{p_1}$ into $X$, there exists a unique pair of SqPO-admissible matches $m_1\in \sqMatch{p_1}{X_0}$ and $m_2\in \sqMatch{p_2}{X_1}$ with $X_1:=p_{1_{m_1}}(X_0)$ such that
\[
		X_2\xLeftarrow[p_2,m_2]{{\tiny SqPO}} X_1 \xLeftarrow[p_1,m_1]{{\tiny SqPO}} X_0\qquad \text{and}\qquad
		X_2\cong q_n(X)\,.
\]
\end{itemize}
\begin{proof}
	See Appendix~\ref{app:SqPOconcur}.
\end{proof}
\end{theorem}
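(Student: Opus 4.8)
The plan is to prove both directions of the theorem as a single large diagram chase assembled from four families of elementary lemmas for $\bfC$: pasting and decomposition of pushouts, of pullbacks, and of final pullback complements (the latter collected in Lemma~\ref{lem:FPCfacts}), together with the van Kampen property of~\eqref{eq:diags} used to propagate the pushout property across the faces of a cube. I would begin by fixing notation on both sides: the two-step sequence unfolds into two instances of Definition~\ref{def:SqPOr}, producing for step~$1$ an FPC context $D_1$ with $D_1\to X_0$ and the pushout $X_1 = O_1 +_{K_1} D_1$, and for step~$2$ an FPC context $D_2$ with $D_2\to X_1$ and $X_2 = O_2 +_{K_2} D_2$; the composite rule $q$ unfolds into the cells of~\eqref{eq:SqPOccomp}. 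The objective is then to identify the assembled diagram of the two-step derivation with the diagram obtained by applying $q$ at a suitable match, up to the canonical isomorphisms provided by uniqueness of the universal constructions involved.

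For \textbf{synthesis}, the decisive step is to recover the overlap span $\mathbf{n}$ from the intermediate object $X_1$: I would set $M_{21} := O_1\times_{X_1} I_2$, the pullback of the two subobjects $O_1\hookrightarrow X_1$ (the output leg of the step-$1$ pushout) and $m_2\colon I_2\hookrightarrow X_1$ (the step-$2$ match), so that both legs $\mu_1,\mu_2$ of $\mathbf{n}$ are monos by stability of monomorphisms under pullback. Forming the pushout $N_{21}$ over $M_{21}$ yields a canonical comparison morphism $N_{21}\to X_1$, which I would argue is monic and exhibits $N_{21}$ as the union of $O_1$ and $I_2$ inside $X_1$, using the effective-unions property illustrated in~\eqref{eq:diags}. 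Admissibility of $\mathbf{n}$, i.e.\ existence of the pushout complement $\overline{K}_1$ of $K_1\xrightarrow{o_1}O_1\xrightarrow{n_1}N_{21}$, then follows by decomposing the step-$1$ pushout $X_1 = O_1 +_{K_1} D_1$ along the subobject $N_{21}\hookrightarrow X_1$ and invoking uniqueness of pushout complements along monomorphisms. The match $n\colon I_{21}\to X_0$ is induced, via the universal property of the pushout $I_{21} = I_1 +_{K_1} \overline{K}_1$, from $m_1\colon I_1\to X_0$ together with a comparison morphism $\overline{K}_1\to D_1$ postcomposed with $D_1\to X_0$; finally $q_n(X_0)\cong X_2$ is verified by pasting the FPC and pushout cells of both constructions and identifying the produced objects $X_2$ and $q_n(X_0)$ through a van Kampen cube.

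The \textbf{analysis} direction runs the same machinery backwards. Starting from the single derivation $q_n(X_0)$, I would use the internal structure of~\eqref{eq:SqPOccomp} to split the data: the map $m_1 := n\circ n_1^{*}\colon I_1\to X_0$ (a mono) serves as the first match, its application yields $X_1$, and the second match $m_2\colon I_2\to X_1$ is read off from the embedding $N_{21}\hookrightarrow X_1$ recovered along the way. The heart of this direction is an FPC decomposition lemma, compatible with the surrounding pushout cells, which splits the single final pullback complement of $q_n(X_0)$ into the two final pullback complements of the individual derivations; the asymmetric FPC/POC structure of~\eqref{eq:SqPOccomp} is what guarantees that the split is consistent and that both one-step derivations are SqPO-admissible and reassemble to $X_2$. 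Uniqueness of $\mathbf{n}$ and of the matches in both directions is then inherited from uniqueness up to isomorphism of pushouts, pullbacks, FPCs and pushout complements along monomorphisms.

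I expect the \textbf{main obstacle} to be exactly the ``mixing'' of final pullback complements with pushouts that distinguishes SqPO from DPO rewriting. In the all-pushout DPO setting, pushout pasting/decomposition together with the van Kampen property already close the argument; here one must instead establish FPC pasting and decomposition lemmas and carefully control their interaction with the pushout cells. The delicate point is the \emph{asymmetry} visible in~\eqref{eq:SqPOccomp}---a final pullback complement on the $p_2$-side but a pushout complement on the $p_1$-side---which encodes the causal intuition that the second rule may implicitly delete structure created by the first but not conversely; the decomposition in the analysis direction must therefore reproduce precisely this asymmetry rather than treat the two sides symmetrically. Orienting each van Kampen cube so that the intended face is a genuine pushout, while keeping every morphism monic throughout (Assumption~\ref{ass:SqPO}), is the bookkeeping that makes the argument intricate, and I would isolate the FPC lemmas as standalone steps before assembling the final chase.
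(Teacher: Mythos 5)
Your proposal follows essentially the same route as the paper's proof: the overlap $\mathbf{n}$ obtained as the pullback of the cospan $(I_2\rightarrow X_1\leftarrow O_1)$, the pushout $N_{21}$ with monicity of $N_{21}\rightarrow X_1$ via effective unions, pushout--pullback decomposition to produce the pushout-complement side, freshly constructed FPCs (whose existence is exactly where Assumption~\ref{ass:SqPO} is needed) reassembled via van Kampen cubes and horizontal/vertical FPC (de)composition in the analysis direction, and uniqueness inherited from the universal constructions. The one ingredient you correctly identify as the main obstacle but do not actually supply is the paper's novel \emph{vertical FPC-pushout decomposition} lemma (Lemma~\ref{lem:Main}\eqref{lem:vertFPCpoDec}, itself proved via effective unions), which is precisely what establishes in the synthesis step that the squares involving $I_{21}$ are FPCs and that the composite match $n\colon I_{21}\rightarrow X_0$ is a monomorphism, neither of which follows from the pre-existing FPC pasting lemmata in the literature.
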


\subsection{Composition and associativity}\label{sec:SqPOassoc}

The following theorem establishes that in analogy to the DPO rewriting setting of~\cite{bp2018}, also the sesqui-pushout variant of rule compositions possesses a form of associativity property.

\begin{theorem}[SqPO-type associativity theorem]\label{thm:SqPOassoc}
	Let $\bfC$ be an adhesive category satisfying Assumption~\ref{ass:SqPO}. %
	Then the SqPO-composition operation $\sqComp{.}{.}{.}$ on linear productions of $\bfC$ is \emph{associative} in the following sense: %
given linear productions $p_1,p_2,p_3\in \Lin{\bfC}$, there exists a bijective correspondence
between pairs of SqPO-admissible matches $(\mathbf{m}_{21},\mathbf{m}_{3(21)})$ and $(\mathbf{m}_{32},\mathbf{m}_{(32)1})$ such that
\begin{equation}\label{eq:THMassoc}
	\sqComp{p_3}{\mathbf{m}_{3(21)}}{\left(\sqComp{p_2}{\mathbf{m}_{21}}{p_1}\right)}\; \cong \; 
	\sqComp{\left(\sqComp{p_3}{\mathbf{m}_{32}}{p_2}\right)}{\mathbf{m}_{(32)1}}{p_1}\,.
\end{equation}
\begin{proof}
Intuitively, the associativity property in the SqPO case manifests itself in a form entirely analogous to the DPO case~\cite{bp2018}, whereby the data provided along the path highlighted in orange below permits to uniquely compute the data provided along the path highlighted in blue and vice versa (with both sets of overlaps computing the same ``triple composite'' production that is encoded as the composition of the three spans in the bottom front row):
\begin{equation}
\vcenter{\hbox{\includegraphics[scale=0.4,page=1]{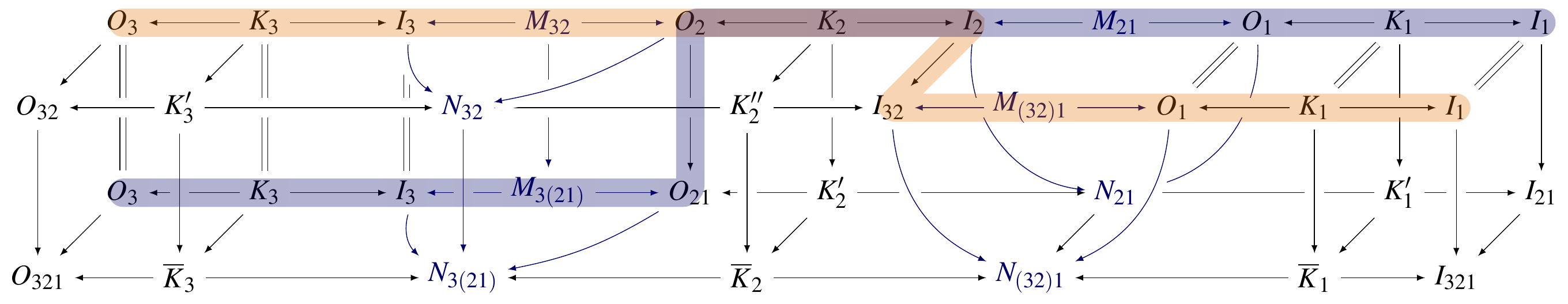}}}
\end{equation}
See Appendix~\ref{app:SqPOassoc} for the precise technical details of the proof.
\end{proof}
\end{theorem}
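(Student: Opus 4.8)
The plan is to adapt the DPO-type associativity argument of~\cite{bp2018} to the SqPO setting, exploiting the structural parallel between the two frameworks. The guiding principle is that neither of the two bracketings in~\eqref{eq:THMassoc} is primitive: both should be recovered from a single combined datum, namely a \emph{three-fold overlap} that simultaneously records how the outputs and inputs of $p_1$, $p_2$ and $p_3$ are mutually glued. Concretely, I would first fully expand the left-nested composite $\sqComp{p_3}{\mathbf{m}_{3(21)}}{(\sqComp{p_2}{\mathbf{m}_{21}}{p_1})}$ by inserting the construction of Definition~\ref{def:SqPOcomp} for the inner composition $q_{21}:=\sqComp{p_2}{\mathbf{m}_{21}}{p_1}$ and then for its composition with $p_3$. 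This yields a large commutative diagram in which the overlap $\mathbf{m}_{3(21)}$ of $I_3$ with $O_{21}$ sits over the pushout object built from $\mathbf{m}_{21}$; performing the analogous expansion for the right-nested composite produces a mirror-image diagram organised around the overlap of the $p_3$-output side with $I_1$. The orange and blue paths of the proof figure are precisely these two expansions, and the bottom front row --- the triple composite span --- is common to both.

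The bijection is then established by a universal-property calculus. I would decompose each SqPO-composition into its elementary constituents (the defining pushouts, the pullback producing the composite trunk $K_{21}$, and the two final pullback complements) and show that the order in which these are performed can be reassociated. The core of the argument is a set of \emph{splitting lemmas}: pushout composition and decomposition, and pullback composition and decomposition, available in any adhesive category, together with their FPC counterparts --- that composable FPCs compose to an FPC, and that an FPC admits a decomposition once a compatible pushout or pullback is given on one side. Using the van Kampen property to transport pushout/pullback status across the relevant cubes, one reads off from the data $(\mathbf{m}_{21},\mathbf{m}_{3(21)})$ a canonical triple overlap, and from that triple overlap a unique pair $(\mathbf{m}_{32},\mathbf{m}_{(32)1})$ realising the right-nested composite, and vice versa. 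Stability of monomorphisms under pushouts, pullbacks and FPCs (Assumption~\ref{ass:SqPO}, Lemma~\ref{lem:FPCfacts}) guarantees that every intermediate object is again a span of monomorphisms, so that all constructions stay within $\Lin{\bfC}$ and the resulting composites agree up to the isomorphism asserted in~\eqref{eq:THMassoc}.

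The main obstacle is exactly the interaction of final pullback complements with pushouts under nesting. In the DPO case of~\cite{bp2018} the whole composition is built from pushouts and pushout complements, so associativity rests only on the well-behaved composition/decomposition calculus for pushout squares and the van Kampen property. In the SqPO case the left half of~\eqref{eq:SqPOccomp} replaces a pushout complement by an FPC, and when three rules are composed these FPCs become nested and interleaved with pushouts. I therefore expect the bulk of the technical work --- and the most delicate step --- to be proving the requisite FPC splitting and FPC-over-pushout compatibility statements in full generality, and checking that the decompositions they furnish are compatible with the pullback defining the composite trunk. Once these FPC analogues of the pushout lemmas are in hand, the reassociation proceeds in close formal analogy to the DPO argument.
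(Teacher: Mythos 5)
Your outline reproduces the paper's strategy in all of its main ingredients: expand both nested composites into one large diagram, construct the new matches by pullback (e.g.\ $M_{32}=\pB{M_{3(21)}\rightarrow O_{21}\leftarrow O_2}$) and the new overlaps by pushout, and transport pushout/FPC status through the diagram via the composition/decomposition lemmas, the van Kampen property, effective unions, and stability of monomorphisms --- the paper indeed had to prove a new vertical FPC--pushout decomposition lemma of exactly the kind you anticipate. However, there is one genuine gap: you treat the two directions of the bijection as mirror images of one another (``the analogous expansion \dots produces a mirror-image diagram'', ``and vice versa''). SqPO-type composition, unlike DPO, is \emph{not reversible} in general, because the two halves of the defining diagram~\eqref{eq:SqPOccomp} are of different natures (an FPC on the left, a pushout complement on the right); the paper stresses that this forces two structurally \emph{independent} proofs for the ``$\Rightarrow$'' and ``$\Leftarrow$'' directions. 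Concretely, if you run your reassociation calculus starting from $(\mathbf{m}_{32},\mathbf{m}_{(32)1})$, vertical FPC-pullback decomposition only yields that the square $\cSquare{K_2',\overline{K}_2,N_{(32)1},N_{21}}$ is an FPC, whereas the corresponding square in the forward direction is a pushout --- and without pushout status the subsequent pushout-(de)composition steps that produce $\mathbf{m}_{3(21)}$ and verify that $\cSquare{M_{3(21)},O_{21},N_{3(21)},I_3}$ is a pushout (not merely a pullback) do not go through.

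The missing idea is the paper's auxiliary ``pushout in disguise'' argument: one splits the relevant subdiagram by taking the pushout $N_{(32)1}'=\pO{\overline{K}_2\leftarrow K_2'\rightarrow N_{21}}$, forms the pullback $O_1'=\pB{N_{(32)1}',N_{(32)1},O_1}$ and checks $O_1'\cong O_1$, then uses the van Kampen property and pushout--pushout decomposition to conclude $N_{(32)1}'\cong N_{(32)1}$, so that the FPC square is in fact also a pushout; a second, more elaborate splitting (introducing auxiliary objects $D\cong B$, $D'\cong B'$ and exploiting stability of isomorphisms under pullback) is then needed to establish the final pushout square. Your generic appeal to ``FPC splitting and FPC-over-pushout compatibility statements'' correctly locates where the difficulty lies, but these ready-made (de)composition lemmas do not by themselves supply the two ad hoc splitting constructions; as stated, the symmetric ``vice versa'' half of your plan would stall at precisely this point.
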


We invite the interested readers to compare the SqPO-type constructions presented here against those contained in the extended journal version~\cite{bp2019-ext} of~\cite{bp2018} for the DPO framework, since this might lend some intuitions on the otherwise very abstract nature of the proofs to the experts.


\section{From associativity to SqPO-type rule algebras}
\label{sec:ACDrd}


For the rule algebra constructions, we will require an additional structure:
\begin{definition}[Initial objects]
	An object $\inObj\in \obj{\bfC}$ of some category $\bfC$ is said to be a \emph{strict initial object} if for every object $X\in \obj{\bfC}$, there exists a unique morphism $\inObj\rightarrow X$, and if any morphism $X\rightarrow \inObj$ must be an isomorphism.
\end{definition}
For example, the category $\mathbf{Graph}$ and its finitary restriction $\mathbf{FinGraph}$ possess a strict initial object (the empty graph). For the experts, it appears worthwhile noting the following result:

\begin{lemma}[Extensive categories; \cite{lack2005adhesive}, Lem.~4.1]
	An adhesive category $\bfC$ is an \emph{extensive category}\footnote{For the purposes of this paper, it suffices to consider the ``if'' direction as a definition of extensivity, since the relevant structure to our constructions is that of having a strict initial object (see e.g.~\cite{lack2005adhesive} for the precise definition of extensivity).} if and only if it possesses a strict initial object.
\end{lemma}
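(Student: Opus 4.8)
The plan is to prove the two implications separately, isolating the content that genuinely uses adhesivity. The \emph{only if} direction (an extensive category has a strict initial object) is a general fact about extensive categories and makes no use of the van Kampen axiom; the \emph{if} direction (an adhesive category with a strict initial object is extensive) is where adhesivity does the work, through the van Kampen property of pushouts along monomorphisms. I would work with the standard slice-category formulation of extensivity: $\bfC$ is extensive iff it has finite coproducts and, for all $A,B\in\obj{\bfC}$, the canonical comparison functor $\bfC/A\times\bfC/B\to\bfC/(A+B)$ is an equivalence, equivalently iff coproducts are disjoint and stable under pullback.

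For the \emph{only if} direction, assume $\bfC$ is extensive. The initial object $\inObj$ is the empty coproduct, and I would show strictness directly from disjointness and universality: given any $f\colon X\to\inObj$, pull the (trivial) coproduct decomposition of $\inObj$ back along $f$; universality forces $X$ to inherit the induced empty decomposition, so $X$ is itself initial and $f$ is an isomorphism. This step is routine and independent of the adhesive structure, so I would keep it brief, noting that it is the direction the paper does not rely on (per the footnote).

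For the \emph{if} direction, assume $\bfC$ is adhesive with a strict initial object $\inObj$. First I would record that strictness makes the unique arrow $\inObj\to X$ a monomorphism for every $X$: any arrow into $\inObj$ is an isomorphism, so any object admitting one is initial, and the required cancellation property follows. Consequently the binary coproduct $A+B$ exists for all $A,B$, being the pushout of the span $A\leftarrow\inObj\rightarrow B$ along monomorphisms, which exists since $\bfC$ has pushouts along monomorphisms; finite coproducts then exist. The crucial observation is that this coproduct square is a pushout along monomorphisms, hence a van Kampen square by adhesivity; moreover, being a pushout along monos it is also a pullback, which already gives disjointness. To extract universality, I would take an arbitrary $c\colon C\to A+B$, form the pullbacks $A'\coloneqq A\times_{A+B}C$ and $B'\coloneqq B\times_{A+B}C$, and assemble the comparison cube having the coproduct pushout square as its bottom face and the two pullback squares defining $A'$ and $B'$ as its back and right faces. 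The van Kampen property then yields that the top face is a pushout iff the remaining vertical faces are pullbacks; using strictness of $\inObj$ to identify the top corner $A'\times_C B'$ with $\inObj$, the top face becomes a coproduct square, giving $C\cong A'+B'$. This is precisely universality, and together with disjointness it establishes the slice equivalence, i.e.\ extensivity.

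The main obstacle I anticipate is the bookkeeping in the van Kampen step: one must set up the cube so that the hypotheses (back and right faces pullbacks) hold by construction, and then correctly read off that strictness of $\inObj$ forces the relevant pullback corner to be initial, so that the resulting pushout at the top is genuinely a coproduct rather than a mere pushout. A secondary point is upgrading the object-level decomposition $C\cong A'+B'$ to a full equivalence of slice categories: the action on morphisms and essential surjectivity must be verified, but these follow formally from the uniqueness clauses in the universal properties of pullback and pushout, so I expect no real difficulty there once the van Kampen application is in place.
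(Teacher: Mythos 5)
The paper itself offers no proof of this lemma: it is quoted, with citation, from Lack and Soboci\'{n}ski (\cite{lack2005adhesive}, Lem.~4.1), so the only meaningful comparison is with the proof in that reference --- and your argument is essentially that proof. The ``only if'' direction is the general Carboni--Lack--Walters fact that extensive categories have strict initial objects (if you spell it out under the binary slice-equivalence definition of extensivity, the standard route is to pull back $\inObj\cong\inObj+\inObj$ along $f\from X\rightarrow\inObj$, getting $X\cong X+X$ with both injections the identity, whence any two parallel maps out of $X$ agree and $X\cong\inObj$; your one-line gloss is compressing exactly this). For the ``if'' direction your ingredients match the cited source: strictness makes every $\inObj\rightarrow X$ a monomorphism, so $A+B$ exists as the pushout of $A\leftarrow\inObj\rightarrow B$ along monomorphisms; disjointness follows since pushouts along monomorphisms are pullbacks and preserve monomorphisms; universality follows from the van Kampen property of this pushout square; and the upgrade from ``disjoint and universal coproducts'' to the slice equivalence is the standard formal step you defer.

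One slip in the van Kampen bookkeeping, which you yourself half-anticipated: the two pullback squares defining $A'=A\times_{A+B}C$ and $B'=B\times_{A+B}C$ sit over the \emph{cospan} $A\rightarrow A+B\leftarrow B$ of the bottom pushout, and therefore cannot serve as the hypothesis faces of the VK condition --- the hypothesis faces are the two faces over the \emph{span} $A\leftarrow\inObj\rightarrow B$, namely those through the top corner $T=A'\times_C B'$ and through $\inObj$. This is precisely where strictness enters twice: first, $T$ carries a cone over $A\rightarrow A+B\leftarrow B$, hence maps into $A\times_{A+B}B\cong\inObj$ (disjointness), so $T\cong\inObj$ by strictness; second, each span face is then a pullback because its pullback object also maps into $\inObj$, hence is initial, and maps out of an initial object are unique. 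With the orientation corrected, VK reads: the span faces being pullbacks, the top face is a pushout if and only if the cospan faces are pullbacks; the latter hold by construction, so the top face exhibits $C\cong A'+B'$, which is universality. So the proof is correct in substance, with the roles of hypothesis and conclusion faces swapped in your write-up; the verification you would have had to do is the same either way, but only this orientation of the VK property is actually available.
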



\begin{asmptn}[Prerequisites for SqPO-type rule algebras]\label{ass:RAsqpo}
	We assume that $\bfC$ is an adhesive category satisfying Assumption~\ref{ass:SqPO}, and which is in addition \emph{finitary} and possesses a strict initial object $\inObj\in \obj{\bfC}$.
\end{asmptn}

\begin{definition}[SqPO-type rule algebras]
Let $\delta:\Lin{\bfC}\rightarrow \cR_{\bfC}$ be defined as an isomorphism from $\Lin{\bfC}$ to the basis of a free $\bR$-vector space $\cR_{\bfC}\equiv(\cR_{\bfC},+,\cdot)$, such that\footnote{Recall that for a set $A$, the notation $span_{\bR}(\{e(a)\mid a\in A\})$ entails to ``take the $\bR$-span over basis vectors $e(a)$ indexed by elements of $A$'', i.e.\ elements of the resulting $\bR$-vector space are (finite) linear combinations of the basis vectors $e(a)$ with real coefficients .}
\begin{equation}
\cR_{\bfC}:=span_{\bR}(\{\delta(p)\mid p\in \Lin{\bfC}\})\,.
\end{equation}
In order to clearly distinguish between elements of $\Lin{\bfC}$ and basis vectors of $\cR_{\bfC}$, we introduce the notation
\begin{equation}
(\grule{O}{p}{I}):=\delta\left(O\xleftarrow{o}K\xrightarrow{i}I\right)\,.
\end{equation}
Define the \emph{SqPO rule algebra product} $\odot_{\cR_{\bfC}}$ on a category $\bfC$ that satisfies Assumption~\ref{ass:RAsqpo} as the binary operation
\begin{equation}
\odot_{\cR_{\bfC}}:\cR_{\bfC}\times \cR_{\bfC}\rightarrow \cR_{\bfC}:(R_2,R_1)\mapsto R_2\odot_{\cR_{\bfC}} R_1\,,
\end{equation}
where for two basis vectors $R_i=\delta(p_i)$ encoding the linear rules $p_i\in Lin(\bfC)$ ($i=1,2$),
\begin{equation}\label{eq:defRcompSqPO}
R_2\odot_{\cR_{\bfC}}R_1
:=\sum_{\mathbf{m}\in \sqRMatch{p_2}{p_1}}\delta\left(\sqComp{p_2}{\mathbf{m}}{p_1}\right)\,.
\end{equation}
The definition is extended to arbitrary (finite) linear combinations of basis vectors by bilinearity, whence for $p_i,p_j\in \Lin{\bfC}$ and $\alpha_i,\beta_j\in \bR$,
\begin{equation}
\left(\sum_i \alpha_i\cdot\delta(p_i)\right)\odot_{\cR_{\bfC}}\left(\sum_j\beta_j\cdot \delta(p_j)\right):=\sum_{i,j}(\alpha_i\cdot\beta_j)\cdot \left(\delta(p_i)\odot_{\cR_{\bfC}}\delta(p_j)\right)\,.
\end{equation}
We call $\cR^{sq}_{\bfC}\equiv(\cR_{\bfC},\odot_{\cR_{\bfC}})$ the \textbf{\emph{SqPO-type rule algebra}} over the finitary adhesive and extensive category $\bfC$.
\end{definition}

The rule algebra product $R_2\odot_{\cR_{\bfC}}R_1$ for $R_j=\delta(r_j)$ ($j=1,2$) thus encodes the non-determinism in the SqPO-type sequential composition of the linear rule $r_2$ with $r_1$ in terms of the ``sum over all possible compositions''. As the following example illustrates, since $\delta$ is defined to map from \emph{isomorphism classes} of linear rules to basis vectors of $\cR_{\bfC}$, and since two distinct matches may lead to isomorphic composite rules, $R_2\odot_{\cR_{\bfC}}R_1$ typically evaluates to a linear combination of basis vectors $\delta(r)$ with integer coefficients:

\begin{example}\label{ex:ruleAlgComp}
	Let $\bfC=\mathbf{FinGraph}$ be the category of finite directed multigraphs, with $\inObj$ the empty graph. Then with $\odot\equiv \odot_{\cR_{\bfC}}$, we find for example 
	\begin{equation}
	\begin{aligned}
		&\delta(\inObj\hookleftarrow \inObj\hookrightarrow \OneVertG[])\odot
		\delta(\OneVertG[]\;\;\OneVertG[]\hookleftarrow \inObj\hookrightarrow\inObj)\\
		&\qquad=\sum_{\substack{\mathbf{m}\in \{
			(\OneVertG[]\hookleftarrow \inObj\hookrightarrow \OneVertG[]\;\;\OneVertG[]),
			(\OneVertG[blue]\hookleftarrow \OneVertG[blue]\hookrightarrow \OneVertG[blue]\;\;\OneVertG[]),\\
			\qquad(\OneVertG[blue]\hookleftarrow \OneVertG[blue]\hookrightarrow \OneVertG[]\;\;\OneVertG[blue])
		\}}}\delta\left(\sqComp{(\inObj\hookleftarrow \inObj\hookrightarrow \OneVertG[])}{\mathbf{m}}{(\OneVertG[]\;\;\OneVertG[]\hookleftarrow \inObj\hookrightarrow\inObj)}\right)\\
		&\qquad =
		\delta(\OneVertG[]\;\;\OneVertG[]\hookleftarrow \inObj\hookrightarrow \OneVertG[])
		+2\delta(\OneVertG[]\hookleftarrow \inObj\hookrightarrow \inObj)\,.
	\end{aligned}
	\end{equation}
	The result of the composition thus captures the combinatorial insight that there are two contributions that evaluate to an isomorphic rule algebra element. More generally, one finds the following structure of compositions of rule algebra elements based upon ``discrete'' graph rewriting rules: letting $\bullet^{\uplus\:n}$ denote the $n$-vertex graph without edges (for $n\geq 0$), one finds (for $p,q,r,s\geq 0$)
	\begin{equation}\label{eq:combinatorial}
	\begin{aligned}
	&\delta(\bullet^{\uplus\:p}\hookleftarrow \inObj\hookrightarrow \bullet^{\uplus\:q})
	\odot\delta(\bullet^{\uplus\:r}\hookleftarrow \inObj\hookrightarrow \bullet^{\uplus\:s})\\
	&\qquad =\sum_{k=0}^{min(q,r)}k!\binom{q}{k}\binom{r}{k}
	\delta(\bullet^{\uplus\:(p+r-k)}\hookleftarrow \inObj\hookrightarrow \bullet^{\uplus\:(q+s-k)})\,.
	\end{aligned}
	\end{equation}
	This result is further interpreted in Example~\ref{ex:repInt}.
\end{example}

\begin{theorem}[Properties of $\cR^{sq}_{\bfC}$]
For every category $\bfC$ satisfying Assumption~\ref{ass:RAsqpo}, the associated SqPO-type rule algebra $\cR^{sq}_{\bfC}\equiv(\cR_{\bfC},\odot_{\cR_{\bfC}})$ is an \emph{associative unital algebra}, with unit element $R_{\inObj}:=(\grule{\inObj}{}{\inObj})$. (Proof: Appendix~\ref{app:SqPOraProps})
\end{theorem}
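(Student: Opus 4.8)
The plan is to reduce the claim to basis vectors by bilinearity and then treat associativity and unitality separately, delegating associativity essentially to Theorem~\ref{thm:SqPOassoc} and reducing the unit laws to a single degenerate instance of the composition construction of Definition~\ref{def:SqPOcomp}. Since $\odot_{\cR_{\bfC}}$ is defined to be bilinear, it suffices to verify the associativity law $(R_3\odot_{\cR_{\bfC}}R_2)\odot_{\cR_{\bfC}}R_1=R_3\odot_{\cR_{\bfC}}(R_2\odot_{\cR_{\bfC}}R_1)$ and the unit laws $R_{\inObj}\odot_{\cR_{\bfC}}R=R=R\odot_{\cR_{\bfC}}R_{\inObj}$ on basis vectors $R_i=\delta(p_i)$ with $p_i\in\Lin{\bfC}$ and $R_{\inObj}=(\grule{\inObj}{}{\inObj})$.

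For associativity I would expand each side on basis vectors. Applying the definition~\eqref{eq:defRcompSqPO} twice, the left-hand side becomes a double sum
\[
(R_3\odot_{\cR_{\bfC}}R_2)\odot_{\cR_{\bfC}}R_1
=\sum_{\mathbf{m}_{32}\in\sqRMatch{p_3}{p_2}}\;\sum_{\mathbf{m}_{(32)1}}
\delta\!\left(\sqComp{\left(\sqComp{p_3}{\mathbf{m}_{32}}{p_2}\right)}{\mathbf{m}_{(32)1}}{p_1}\right)\,,
\]
where the inner sum ranges over $\mathbf{m}_{(32)1}\in\sqRMatch{(\sqComp{p_3}{\mathbf{m}_{32}}{p_2})}{p_1}$, and symmetrically the right-hand side is the analogous double sum over pairs $(\mathbf{m}_{21},\mathbf{m}_{3(21)})$. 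Theorem~\ref{thm:SqPOassoc} supplies exactly a bijection between these two indexing sets of pairs of admissible matches under which the associated triple composites are isomorphic; since $\delta$ is constant on isomorphism classes of productions, corresponding summands coincide as basis vectors of $\cR_{\bfC}$. Reindexing one double sum along this bijection then identifies the two sides term by term.

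For unitality I would compute $R_{\inObj}\odot_{\cR_{\bfC}}\delta(p)$ and $\delta(p)\odot_{\cR_{\bfC}}R_{\inObj}$ directly for $p=(O\xleftarrow{o}K\xrightarrow{i}I)$. The decisive point is that the unit rule has both input and output object $\inObj$, so that any SqPO-admissible match is an overlap span one leg of which is a monomorphism with codomain $\inObj$; because $\inObj$ is a \emph{strict} initial object, such a morphism forces the overlap object to be isomorphic to $\inObj$. Hence in each of the two cases there is a unique admissible match (with overlap $\inObj$), so each sum in~\eqref{eq:defRcompSqPO} collapses to a single term. It then remains to trace the defining diagram~\eqref{eq:SqPOccomp} through this degenerate configuration: the middle pushout, the final pullback complement, the pushout complement and the two outer pushouts all specialize to (co)spans along $\inObj$ or along identities, and using that pushouts and pullbacks taken over a strict initial object are trivial one finds that the resulting composite span is isomorphic to $p$. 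This yields $R_{\inObj}\odot_{\cR_{\bfC}}\delta(p)=\delta(p)=\delta(p)\odot_{\cR_{\bfC}}R_{\inObj}$, as required.

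The main obstacle is not associativity: granting Theorem~\ref{thm:SqPOassoc}, that part is bookkeeping, the only care being to confirm that the theorem's bijection reindexes precisely the two nested match sets appearing in the expanded products and that the collapse of isomorphic composites to equal basis vectors is legitimate (which it is, by definition of $\delta$). The genuine computation lies in the unit laws, where one must verify both that strict initiality — equivalently, extensivity — really pins the overlap down to $\inObj$ so that each sum is a single term, and that the final-pullback-complement and pushout-complement steps in the degenerate diagram are constructible and return identity-like morphisms, so that the composite equals $p$ up to isomorphism. This is the step at which the full strength of Assumption~\ref{ass:RAsqpo} (adhesive, finitary, FPCs stable along monomorphisms, and a strict initial object) is actually invoked.
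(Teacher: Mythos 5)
Your proposal is correct and follows essentially the same route as the paper's proof in Appendix~\ref{app:SqPOraProps}: associativity is delegated to the bijection of match pairs from Theorem~\ref{thm:SqPOassoc} (which you merely spell out as an explicit double-sum reindexing), and unitality is verified by tracing the degenerate instance of diagram~\eqref{eq:SqPOccomp}, with strict initiality forcing the overlap to be $\inObj$ and the pushout complement existing trivially (the paper cites Lemma~\ref{lem:Main}\eqref{lem:fPOPB} for the latter, exactly the ``identity-like'' collapse you describe).
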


For the unital and associative SqPO-type rule algebras, one may provide a notion of \emph{representations} in analogy to the DPO-type case (compare~\cite{bdg2016,bp2018}):
\begin{definition}[Canonical representation of $\cR^{sq}_{\bfC}$]\label{def:canRepSqPO}
Let $\bfC$ be a category satisfying Assumption~\ref{ass:RAsqpo}, with a strict initial object $\inObj\in ob(\bfC)$, and let $\cR^{sq}_{\bfC}$ be its associated rule algebra of SqPO type. Denote by $\hat{\bfC}$ the free $\bR$-vector space spanned by basis vectors $\ket{X}$ indexed by isomorphism classes of objects,
\begin{equation}
\hat{\bfC}:= span_{\bR}\left(\left\{\left.
\ket{X}
\right\vert X\in \obj{\bfC}_{\cong}
\right\}\right)\equiv (\hat{C},+,\cdot)\,.
\end{equation}
Then the \emph{canonical representation} $\rho^{sq}_{\bfC}:\cR^{sq}_{\bfC}\rightarrow End_{\bR}(\hat{\bfC})$ of $\cR^{sq}_{\bfC}$ is defined as a morphism from the SqPO-type rule algebra $\cR^{sq}_{\bfC}$ to endomorphisms of $\hat{\bfC}$, with
\begin{equation}\label{eq:canRepSqPO}
\rho^{sq}_{\bfC}(\delta(p))\ket{X}:=\begin{cases}
\sum_{m\in \sqMatch{p}{X}}\ket{p_m(X)}\quad &\text{if }\sqMatch{p}{X}\neq \emptyset\\
0_{\hat{\bfC}}&\text{otherwise,}
\end{cases}
\end{equation}
and extended to arbitrary elements of $\cR^{sq}_{\bfC}$ and of $\hat{\bfC}$ by linearity.
\end{definition}
\begin{example}\label{ex:repInt}
Extending Example~\ref{ex:ruleAlgComp}, letting $\rho\equiv \rho^{sq}_{\mathbf{FinGraph}}$, note first that by definition for all (isomorphism classes of) finite multigraphs $G\in \obj{\mathbf{FinGraph}}_{\cong}$, $\ket{G}=\rho(\delta(G\hookleftarrow \inObj\hookrightarrow \inObj))\ket{\inObj}$. With 
\begin{equation}\label{eq:HWrep}
\hat{D}:=\rho(\delta(\inObj\hookleftarrow \inObj\hookrightarrow \OneVertG[]))\,,\;
\hat{X}:=\rho(\delta(\OneVertG[]\hookleftarrow \inObj\hookrightarrow \inObj))\,,\;
\ket{n}:=\ket{\bullet^{\uplus\:n}} (n\geq0)\,,
\end{equation}
as a consequence of~\eqref{eq:combinatorial} of Example~\ref{ex:ruleAlgComp} one may verify that
\begin{equation}\label{eq:HWreP2}
\hat{D}\ket{0}=0_{\widehat{\mathbf{FinGraph}}}\,,\; \hat{D}\ket{n}=n\ket{n-1}\; (n>0)\,,\;
\hat{X}\ket{n}=\ket{n+1}\,.
\end{equation}
In other words, the data of~\eqref{eq:HWrep} and~\eqref{eq:HWreP2} furnishes a representation of the famous \emph{Heisenberg-Weyl algebra} that is of fundamental importance in combinatorics and physics (see e.g.\ \cite{blasiak2005boson,blasiak2010combinatorial,blasiak2011combinatorial}). An alternative such representation is given by the linear operators $\hat{x}$ (multiplication by $x$) and $\partial_x$ (derivation by $x$) acting on the $\bR$-vector space spanned by monomials $x^n$, which reproduces equations isomorphic to~\eqref{eq:HWrep} and~\eqref{eq:HWreP2}, with $\partial_x x^n=n x^{n-1}$ and $\hat{x} x^n=x^{n+1}$. However, the action of $\hat{D}$ and $\hat{X}$ is of course defined on \emph{all} states $\ket{G}$ with $G\in \obj{\mathbf{FinGraph}}$, so that we may e.g.\ compute the following ``derivative of a graph'':
\begin{equation}
\hat{D}\ket{\tP{%
\node[vertices] (a) at (1,1) {};
\node[vertices] (b) at (1.7,1) {};
\node[vertices] (c) at (2.4,1) {};
\draw (a) edge[dirEdge] (b);
\draw (b) edge[dirEdge] (c);}}=
2\ket{\tP{%
\node[vertices] (a) at (1,1) {};
\node[vertices] (b) at (1.7,1) {};
\draw (a) edge[dirEdge] (b);}}
+\ket{\tP{%
\node[vertices] (a) at (1,1) {};
\node[vertices] (b) at (1.7,1) {};}}
\end{equation}
\end{example}

The following theorem states that $\rho_C^{sq}$ as given in Definition~\ref{def:canRepSqPO} is indeed a homomorphism (and thus qualifies as a representation of $\cR^{sq}_{\bfC}$).
\begin{restatable}[SqPO-type canonical representation]{theorem}{canRepSqpo}
For a category $\bfC$ satisfying Assumption~\ref{ass:RAsqpo}, $\rho^{sq}_{\bfC}: \cR^{sq}_{\bfC} \rightarrow End(\hat{\bfC})$ of Definition~\ref{def:canRepSqPO} is a homomorphism of unital associative algebras. (Proof: Appendix~\ref{app:SqPOcanrep})
\end{restatable}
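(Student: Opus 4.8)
The plan is to show that $\rho^{sq}_{\bfC}$ respects the algebra structure in two parts: first that it preserves the unit, and second—the substantive part—that it respects products, i.e.\ that for all basis vectors $R_2=\delta(p_2)$, $R_1=\delta(p_1)$ and all objects $X$,
\begin{equation}
\rho^{sq}_{\bfC}(R_2\odot_{\cR_{\bfC}}R_1)\ket{X} \;=\; \rho^{sq}_{\bfC}(R_2)\,\rho^{sq}_{\bfC}(R_1)\ket{X}\,.
\end{equation}
By bilinearity of $\odot_{\cR_{\bfC}}$ and linearity of $\rho^{sq}_{\bfC}$, it suffices to check this on basis elements $\delta(p_i)$ and basis states $\ket{X}$, after which the general case follows by extending linearly. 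The unit axiom $\rho^{sq}_{\bfC}(R_{\inObj})=\mathrm{id}_{\hat{\bfC}}$ should reduce to observing that the trivial rule $(\inObj\leftarrow\inObj\rightarrow\inObj)$ admits exactly one SqPO-admissible match into every $X$ (the unique morphism from the strict initial object), whose derivation returns $X$ itself; this is the easy part and I would dispatch it quickly.

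\emph{The heart of the proof is the product axiom, and this is where the Concurrency Theorem does all the work.} First I would expand the right-hand side: applying Definition~\ref{def:canRepSqPO} twice,
\begin{equation}
\rho^{sq}_{\bfC}(\delta(p_2))\,\rho^{sq}_{\bfC}(\delta(p_1))\ket{X}
=\sum_{m_1\in \sqMatch{p_1}{X}}\;\sum_{m_2\in \sqMatch{p_2}{X_1}}\ket{p_{2_{m_2}}(X_1)}\,,
\end{equation}
where $X_1:=p_{1_{m_1}}(X)$. Thus the right-hand side is exactly a sum over all two-step SqPO derivation sequences $X_2\Leftarrow X_1\Leftarrow X$ starting from $X$. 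Next I would expand the left-hand side using the definition~\eqref{eq:defRcompSqPO} of $\odot_{\cR_{\bfC}}$ together with Definition~\ref{def:canRepSqPO}:
\begin{equation}
\rho^{sq}_{\bfC}(\delta(p_2)\odot_{\cR_{\bfC}}\delta(p_1))\ket{X}
=\sum_{\mathbf{n}\in \sqRMatch{p_2}{p_1}}\;\sum_{n\in \sqMatch{\sqComp{p_2}{\mathbf{n}}{p_1}}{X}}\ket{(\sqComp{p_2}{\mathbf{n}}{p_1})_n(X)}\,,
\end{equation}
so the left-hand side is a sum over all pairs $(\mathbf{n},n)$ of an admissible rule-composition match together with an admissible match of the composite into $X$.

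\emph{The core step is then to exhibit a bijection between the two index sets}, namely between two-step derivation sequences $(m_1,m_2)$ out of $X$ and pairs $(\mathbf{n},n)$ as above, under which the produced objects agree up to isomorphism. But this bijection is precisely the content of Theorem~\ref{thm:SqPOconcur}: the \textbf{Synthesis} direction sends each sequence $(m_1,m_2)$ to a unique pair $(\mathbf{n},n)$ with $(\sqComp{p_2}{\mathbf{n}}{p_1})_n(X)\cong X_2$, while the \textbf{Analysis} direction provides the inverse, sending each $(\mathbf{n},n)$ to a unique sequence $(m_1,m_2)$ with $X_2\cong(\sqComp{p_2}{\mathbf{n}}{p_1})_n(X)$. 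Uniqueness in both directions furnishes mutually inverse maps, and the isomorphism of produced objects guarantees that corresponding terms contribute the same basis vector $\ket{X_2}=\ket{(\sqComp{p_2}{\mathbf{n}}{p_1})_n(X)}$ (recalling that $\ket{\cdot}$ is indexed by isomorphism classes). Re-indexing the right-hand sum along this bijection therefore yields the left-hand sum term by term. Extending by linearity over arbitrary elements of $\cR^{sq}_{\bfC}$ and $\hat{\bfC}$ completes the argument.

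\emph{The main obstacle} I anticipate is purely bookkeeping rather than conceptual: one must be careful that the bijection of the Concurrency Theorem is stated for matches into a \emph{fixed} object, and that the double sum on the right genuinely ranges over the same data the theorem parametrizes—in particular that the intermediate object $X_1$ is determined by $m_1$ and is not free data to be summed independently. The subtlety is that the Synthesis/Analysis correspondence is a bijection of \emph{pairs}, so I must resist treating $X_1$ as an extra summation variable; once the right-hand double sum is correctly read as ranging over composable pairs $(m_1,m_2)$ (with $m_2$ depending on the $X_1$ produced by $m_1$), the Concurrency Theorem applies verbatim. I would flag this indexing point explicitly to keep the re-indexing rigorous, since it is the one place where a careless reading could silently conflate distinct derivation sequences.
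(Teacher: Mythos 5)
Your proposal is correct and follows essentially the same route as the paper's own proof in Appendix~\ref{app:SqPOcanrep}: reduce to basis elements by linearity, dispatch the unit axiom via the unique match of the trivial rule $(\inObj\leftarrow\inObj\rightarrow\inObj)$, and establish the product axiom by expanding both sides into double sums and re-indexing via the Synthesis/Analysis bijection of the Concurrency Theorem (Theorem~\ref{thm:SqPOconcur}). Your closing remark on reading the right-hand double sum as ranging over composable pairs $(m_1,m_2)$ rather than treating $X_1$ as free data is exactly the indexing discipline the paper's computation implicitly relies on, so nothing is missing.
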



\section{Applications of SqPO-type rule algebras to stochastic mechanics}
\label{sec:SM}


In practical applications of stochastic rewriting systems, the type of rewriting semantics presents one of the key design choices. For example, if in a given situation a stochastic graph rewriting system should be implemented, choosing DPO- vs.\ SqPO-type rewriting entails two entirely different semantics in terms of the behavior of vertex deletion rules: in the former case, vertices may only be deleted if also all its incident edges are explicitly deleted as well, while in the latter case no such restriction applies (i.e.\ an application of a vertex deletion rule ``automatically'' leads to the deletion of all incident edges). Evidently, such fundamentally different behavior at the level of rewriting rules will also have strong influence on the dynamical behavior of the associated stochastic rewriting systems, whence it is of considerable practical interest to have a universal implementation of such systems available in both formalisms. We begin by specializing the general definition of continuous-time Markov chains (see e.g.\ \cite{norris,Anderson_1991}) to the setting of SqPO-type rewriting systems in close analogy to~\cite{bdg2016,bdp2017,bp2018}.

\begin{definition}[Continuous-time Markov Chains (CTMCs); compare \cite{bp2018}, Def.~7.1]\label{def:CTMCs}
Let $\bfC$ be a category satisfying Assumption~\ref{ass:RAsqpo}, and which in addition possesses a \emph{countable} set of isomorphism classes of objects $\obj{\bfC}_{\cong}$. Let $\hat{\bfC}$ denote the free $\bR$-vector space introduced in Definition~\ref{def:canRepSqPO}. We define the space $Prob(\bfC)$ as the \emph{space of sub-probability distributions} in the following sense:
\begin{equation}
Prob(\bfC):=\left.\left\{
\ket{\Psi}=\!\!\!\!\!\!\sum_{o\in \obj{\bfC}_{\cong}}\!\!\!\!\!\!\psi_o \ket{o}
\right\vert
\forall o\in \obj{\bfC}_{\cong}: \psi_o\in \bR_{\geq0}
\land\!\!\!\!\!\!
\sum_{o\in \obj{\bfC}_{\cong}}\!\!\!\!\!\!\psi_o\leq 1
\right\}
\end{equation}	
Let $Stoch(\bfC):=End_{\bR}(Prob(\bfC))$ be the space of endomorphisms of $Prob(\bfC)$, with elements referred to as \emph{sub-stochastic operators}. Then a \textbf{\emph{continuous-time Markov chain (CTMC)}} is specified in terms of a tuple of data $(\ket{\Psi(0)},H)$, where $\ket{\Psi(0)}\in Prob(\bfC)$ is the \emph{initial state}, and where $H\in End_{\bR}(\cS_{\bfC})$ is the \emph{infinitesimal generator} or \emph{Hamiltonian} of the CTMC (with $\cS_{\bfC}$ the space of real-valued sequences indexed by elements of $\obj{\bfC}_{\cong}$ and with finite coefficients). $H$ is required to be an infinitesimal (sub-) stochastic operator, which entails that for $H\equiv (h_{o,o'})_{o,o'\in \obj{\bfC}_{\cong}}$ and for all $o,o'\in \obj{\bfC}_{\cong}$,
\begin{equation}\label{def:Hprops}
(i)\; h_{o,o}\leq 0\,,\; (ii) \forall o\neq o':\; h_{o,o'}\geq 0\,,\; 
(iii)\; \sum_{o'} h_{o,o'}=0\,.
\end{equation}
Then this data encodes the \emph{evolution semi-group} $\cE:\bR_{\geq 0}\rightarrow Stoch(\bfC)$ as the (point-wise minimal non-negative) solution of the \emph{Kolmogorov backwards} or \emph{master equation}:
\begin{equation}
\tfrac{d}{dt}\cE(t)=H\cE(t)\,,\; \cE(0)=\mathbb{1}_{Stoch(\bfC)}
\Rightarrow \;\forall t,t'\in \bR_{\geq 0}: \cE(t)\cE(t')=\cE(t+t')
\end{equation}
Consequently, the \emph{time-dependent state} $\ket{\Psi(t)}$ of the system is given by 
\begin{equation}
\forall t\in \bR_{\geq 0}:\quad \ket{\Psi(t)}=\cE(t)\ket{\Psi(0)}\,.
\end{equation}
\end{definition}

An important technical aspect of the above definition of CTMCs is the definition of the relevant space of (sub-)probability distributions in interaction with the definition of the infinitesimal generator $H$ and of the space $\cS_{\bfC}$. Some remarks on this interaction and a short explanation of the relevant mathematical concepts are provided in Appendix~\ref{app:StochMechProof}.

Our main approach in studying CTMCs based on rewriting systems will consist in analyzing the dynamical statistical behavior of so-called observables:
\begin{definition}[Observables; \cite{bp2018}, Def.~7.1]\label{def:obs}
Let $\cO_{\bfC}\subset End_{\bR}(S_{\bfC})$ denote the space of \emph{observables}, defined as the space of \emph{diagonal operators}\footnote{Depending on the concrete case, the eigenvalue $\omega_O(X)$ in $O\ket{X}=\omega_O(X)\ket{X}$ may e.g.\ coincide with the number of occurrences of a pattern in the object $X$ (see also Appendix~\ref{sec:appProofSMF}).},
\begin{equation}
\cO_{\bfC}:=\{O\in End_{\bR}(S_{\bfC})\mid \forall X\in \obj{\bfC}_{\cong}:\; O\ket{X}=\omega_O(X)\ket{X}\,,\; \omega_O(X)\in \bR\}\,.
\end{equation}
We furthermore define the so-called \emph{projection operation} $\bra{}:S_{\bfC}\rightarrow \bR$ via extending by linearity the definition of $\bra{}$ acting on basis vectors of $\hat{\bfC}$,
\begin{equation}
\forall X\in \obj{\bfC}_{\cong}:\quad \braket{}{X}:=1_{\bR}\,.
\end{equation}
These definitions induce a notion of \emph{correlators} of observables (also referred to as (mixed) moments), defined for $O_1,\dotsc,O_n\in \cO_{\bfC}$ and $\ket{\Psi}\in Prob(\bfC)$ as
\begin{equation}
\langle O_1,\dotsc,O_n\rangle_{\ket{\Psi}}:=\bra{}O_1,\dotsc,O_n\ket{\Psi}
=\sum_{X\in \obj{\bfC}_{\cong}}\psi_X\cdot\omega_{O_1}(X)\cdots \omega_{O_n}(X)\,.
\end{equation}
\end{definition}

The precise relationship between the notions of CTMCs and SqPO-type rewriting rules as encoded in the corresponding SqPO-type rule algebra formalism is established in the form of the following theorem, where in particular the notion of observables is quite different in nature to the DPO-type analogon (compare Thm.~7.12 of~\cite{bp2018}). This result is the first-of-its-kind \emph{universal} definition of SqPO-type stochastic rewriting systems with ``mass-action semantics'' (where activities of productions are proportional to their number of admissible matches in a given system state).

\begin{restatable}[SqPO-type stochastic mechanics framework]{theorem}{thmStochMechSqPO}\label{thm:smfSqPO}
Let $\bfC$ be a category satisfying Assumption~\ref{ass:RAsqpo}. Let $\{(\grule{O_j}{p_j}{I_j})\in \cR^{sq}_{\bfC}\}_{j\in \cJ}$ be a (finite) set of rule algebra elements, and $\{\kappa_j\in \bR_{\geq 0}\}_{j\in \cJ}$ a collection of non-zero parameters (called \emph{base rates}). Then one may construct the Hamiltonian $H$ of the associated CTMC from this data according to
\begin{equation}
H:=\hat{H}+\bar{H}\,,\quad 
\hat{H}:=\sum_{j\in \cJ}\kappa_j\cdot \rho_{\bfC}^{sq}\left(\grule{O_j}{p_j}{I_j}\right)\,,\quad 
\bar{H}:=-\sum_{j\in \cJ}\kappa_j\cdot \bO_{I_j}^{sq}\,.
\end{equation}
Here, the notation $\bO_M^{sq}$ for arbitrary objects $M\in \obj{\bfC}$ denotes the \emph{observables} (sometimes referred to as \emph{motif counting observables}) for the resulting CTMC of SqPO-type, with
\begin{equation}
\bO_M^{sq}:=\rho_{\bfC}^{sq}\left(\delta\left(
M\xleftarrow{id_M}M\xrightarrow{id_M}M
\right)\right)\,.
\end{equation}
We furthermore have the \emph{SqPO-type jump-closure property}, whereby for all $(\grule{O}{p}{I})\in \cR^{sq}_{\bfC}$
\begin{equation}\label{eq:ojcSqPO}
\bra{}\rho^{sq}_{\bfC}(\grule{O}{p}{I})=\bra{}\bO_I^{sq}\,.
\end{equation}
\end{restatable}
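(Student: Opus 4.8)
The plan is to reduce both assertions to the \emph{SqPO-type jump-closure property}~\eqref{eq:ojcSqPO}, from which the three defining conditions~\eqref{def:Hprops} of an infinitesimal stochastic operator follow almost formally. Before anything else I would check well-definedness: since $\bfC$ is finitary by Assumption~\ref{ass:RAsqpo}, every object admits only finitely many subobjects, so for each $X$ the match set $\sqMatch{p}{X}$ of monomorphisms $I\hookrightarrow X$ is finite and~\eqref{eq:canRepSqPO} assigns to $\delta(p)$ a genuine endomorphism of $\hat{\bfC}$ (each basis vector being sent to a finite linear combination). As $\cJ$ is finite, $H=\hat{H}+\bar{H}$ is then a well-defined element of $End(\hat{\bfC})$, and the covector identity~\eqref{eq:ojcSqPO} is to be read as equality after evaluation on each basis vector $\ket{X}$, where all sums involved are finite.

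First I would record the action of the counting observable: applying the identity span $(I\xleftarrow{id_I}I\xrightarrow{id_I}I)$ along any match leaves the object unchanged, since both its final pullback complement and its pushout are trivial, so that $\bO^{sq}_{I}$ is diagonal with $\bO^{sq}_{I}\ket{X}=\lvert\{m:I\hookrightarrow X\}\rvert\,\ket{X}$. The decisive observation, special to the SqPO setting, is that admissibility of a match imposes \emph{no} constraint beyond $m$ being a monomorphism $I\hookrightarrow X$: unlike in the DPO case there is no gluing or dangling condition, because the FPC along any monomorphism exists by Assumption~\ref{ass:SqPO}. Consequently $\sqMatch{p}{X}$ equals this same set $\{m:I\hookrightarrow X\}$ and depends only on the input object of $p$. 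The jump-closure property~\eqref{eq:ojcSqPO} then follows by evaluating both sides on an arbitrary $\ket{X}$: using $\braket{}{Y}=1$ for every $Y$, the left-hand side is $\sum_{m\in\sqMatch{p}{X}}\braket{}{p_m(X)}=\lvert\sqMatch{p}{X}\rvert$, which by the preceding identification is exactly $\bra{}\bO^{sq}_{I}\ket{X}$.

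With~\eqref{eq:ojcSqPO} in hand I would verify the three conditions~\eqref{def:Hprops} in the order (ii), (iii), (i). For~(ii), the only off-diagonal contributions to $H$ come from $\hat{H}$, as $\bar{H}$ is diagonal; and by~\eqref{eq:canRepSqPO} together with $\kappa_j\geq 0$ every matrix element of $\hat{H}$ is a non-negative (integer) multiple of some $\kappa_j$, so the off-diagonal entries of $H$ are non-negative. For the conservation condition~(iii) I compute, using jump-closure for each $j\in\cJ$, $\bra{}\hat{H}=\sum_{j\in\cJ}\kappa_j\,\bra{}\rho^{sq}_{\bfC}(\grule{O_j}{p_j}{I_j})=\sum_{j\in\cJ}\kappa_j\,\bra{}\bO^{sq}_{I_j}=-\bra{}\bar{H}$, whence $\bra{}H=0$; this is precisely the probability-conservation statement~(iii). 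Finally~(i) is automatic: evaluating $\bra{}H=0$ on $\ket{o}$ expresses the diagonal entry as minus the sum of the off-diagonal entries tied to it, which are non-negative by~(ii), so each diagonal entry is $\leq 0$.

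The genuinely load-bearing step is thus the jump-closure identity, and within it the SqPO-specific fact that admissible matches are \emph{unconstrained} monomorphisms out of the input object; this is exactly what makes $\bO^{sq}_{I}$ the plain subobject-counting operator for $I$ and what sharply distinguishes the construction from its DPO analogue, where the gluing condition forces a more delicate observable. Everything else is the bookkeeping already flagged — finitariness for well-definedness on the infinite-dimensional $\hat{\bfC}$, and the basis-vector reading of the covector identities — so I expect no serious further obstacle.
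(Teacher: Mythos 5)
Your proposal is correct and follows essentially the same route as the paper's proof: the load-bearing step in both is that SqPO-admissible matches are unconstrained monomorphisms $I\hookrightarrow X$, so that $\lvert\sqMatch{p}{X}\rvert=\lvert\sqMatch{p_{id_I}}{X}\rvert$ and jump-closure follows from $\braket{}{Y}=1$ on every basis vector, with finitarity guaranteeing row-finiteness and the $Q$-matrix conditions~\eqref{def:Hprops} then verified exactly as you do. The only elements of the paper's proof you omit are peripheral to the statement as given: the converse characterization that identity rules $p_{id_M}$ are the \emph{only} linear rules whose representations act diagonally, and the remark (citing standard CTMC theory) that a stable conservative $Q$-matrix extends from finite linear combinations in $\hat{\bfC}$ to the larger space $\cS_{\bfC}$ of infinitely supported distributions on which the Hamiltonian is formally required to act.
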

\begin{proof}
See Appendix~\ref{app:StochMechProof}.
\end{proof}

\section{Application example: a dynamical random graph model}
\label{sec:appEx}

In order to illustrate our novel SqPO-type stochastic mechanics framework, let us consider a dynamical system evolving on the space of finite directed multigraphs.

\begin{example}
	Let $\mathbf{FinGraph}$ be the finitary restriction of the category $\mathbf{Graph}$ (see also Lemma~\ref{lem:GraphFPC}), and denote by $\inObj\in \mathbf{FinGraph}$ the strict initial object (the empty graph). We define a stochastic SqPO rewriting system based upon rules encoding \emph{vertex creation/deletion} ($v_{\pm}$) and \emph{edge creation/deletion} ($e_{\pm}$):
	\begin{equation}
	\begin{aligned}
		v_{+}&:=(\OneVertG\leftarrow \inObj\rightarrow \inObj)
		&\qquad  
		v_{-}&:=(\inObj\leftarrow \inObj\rightarrow \OneVertG)\\
		e_{+}&:=(\TwoVertDirEdgeG[] \leftarrow \TwoVertG[]\rightarrow \TwoVertG[])
		&\qquad  
		e_{-}&:=(\TwoVertG[] \leftarrow \TwoVertG[]\rightarrow \TwoVertDirEdgeG[])
	\end{aligned}
	\end{equation}
	Together with a choice of \emph{base rates} $\nu_{\pm},\varepsilon_{\pm}\in \bR_{\geq0}$ and an initial state $\ket{\Psi(0)}\in Prob(\mathbf{FinGraph})$, this data defines a stochastic rewriting system with Hamiltonian $H:=\hat{H}+\bar{H}$,
	\begin{equation}
	\begin{aligned}
		\hat{H}&=\nu_{+}V_{+}+\nu_{-}V_{-}+\varepsilon_{+}E_{+}+\varepsilon_{-}E_{-}\\
		\bar{H}&=-\nu_{+}\bO_{\inObj}-\nu_{-}\bO_{\OneVertG}
		-\varepsilon_{+}\bO_{\OneVertG\; \OneVertG}-\varepsilon_{-}\bO_{\TwoVertDirEdgeG[]}\,,
	\end{aligned}
	\end{equation}
	where $V_{\pm}:=\rho^{sq}_{\mathbf{FinGraph}}(\delta(v_{\pm}))$ and $E_{\pm}:=\rho^{sq}_{\mathbf{FinGraph}}(\delta(e_{\pm}))$.
\end{example}

Despite the apparent simplicity of this model (which might be seen as a paradigmatic example of a \emph{random graph model}), the explicit analysis via the stochastic mechanics framework will uncover a highly non-trivial interaction of the dynamics of the vertex- and of the edge-counting observables. Intuitively, since in SqPO-rewriting no conditions are posed upon vertices that are to be deleted, the model is expected to possess a vertex dynamics that is the one of a so-called \emph{birth-death process}. If it were not for the vertex deletions, one would find a similar dynamics for the edge-counting observables (compare e.g.\ the DPO-type rewriting model considered in~\cite{bp2018}). However, since deletion of vertices deletes all incident edges, the dynamics of the edge-counting observable is rendered considerably more complicated, and in particular much less evident to foresee by heuristic arguments.\\

In order to compute the dynamics of the vertex counting observable $O_V:=\bO_{\OneVertG}$, we follow the approach of \emph{exponential moment-generating functions} put forward in~\cite{bdg2016,bdp2017,bdg2019} and define
\begin{equation}
	M_V(t;\lambda):=\bra{}e^{\lambda O_V}\ket{\Psi(t)}\,,
\end{equation}
with $\lambda$ a formal variable. $M_V(t;\lambda)$ encodes the moments of the observable $O_V$, in that taking the $n$-th derivative of $M_V(t;\lambda)$ w.r.t.\ $\lambda$ followed by setting $\lambda\to0$ yields the $n$-th moment of $O_V$. Note that we must assume the \emph{finiteness} of all statistical moments as standard in the probability theory literature in order for $M_V(t;\lambda)$ to be well-posed, a property that we will in the case at hand indeed derive explicitly. Referring the interested readers to~\cite{bp2019-ext} for further details, suffice it here to recall the following variant of the BCH formula  (see e.g.\  \cite{hall2015lieGroups}, Prop.~3.35), for $\lambda$ a formal variable and $A,B$ two composable linear operators,
\begin{equation}
e^{\lambda A}Be^{-\lambda A}=e^{ad_{\lambda A}}B
	=\sum_{n\geq 0}\frac{\lambda^n}{n!} ad_{ A}^{\circ n}(B)\,,\quad
ad_A(B):=AB-BA\equiv [A,B]\,,
\end{equation}
with the convention that $ad_A^{\circ 0}(B):=B$. The operation $[.,.]$ is typically referred to as the \emph{commutator}. We may then derive the \emph{formal evolution equation} for $M_V(t;\lambda)$:
\begin{equation}
\begin{aligned}
	\tfrac{\partial}{\partial t}M_V(t;\lambda)&=\bra{}e^{\lambda O_V}H\ket{\Psi(t)}=\bra{}\left(e^{\lambda O_V}H e^{-\lambda O_V}\right)e^{\lambda O_V}\ket{\Psi(t)}\\
	&=\bra{}\left(e^{ad_{\lambda O_V}}H\right)e^{\lambda O_V}\ket{\Psi(t)}\,.
\end{aligned}
\end{equation}
Since by definition $\bra{}H=0$, it remains to compute the adjoint action $ad_{O_V}(H)$ of $O_V$ on $H$:
\begin{equation}
\begin{aligned}
	ad_{O_V}(H)&=\nu_{+}[O_V,V_{+}]+\nu_{-}[O_V,V_{-}]+\varepsilon_{+}[O_V,E_{+}]+\varepsilon_{-}[O_V,E_{-}]\\
	&=\nu_{+}V_{+}-\nu_{-}V_{-}
\end{aligned}
\end{equation}
Here, the result that $[O_V,E_{\pm}]=0$ has a very simple intuitive meaning: in applications of the linear rules $e_{\pm}$, the number of vertices remains unchanged, whence the vanishing of the commutator. Combining these results with the SqPO-type \emph{jump-closure property} (cf.\ Theorem~\ref{thm:smfSqPO}), we finally arrive at the following \emph{formal evolution equation} for $M_V(t;\lambda)$:
\begin{equation}\label{eq:FEQ}
\begin{aligned}
\tfrac{\partial}{\partial t}M_V(t;\lambda)&=
	\nu_{+}\left(e^{\lambda}-1\right)\bra{}V_{+}e^{\lambda O_V}\ket{\Psi(t)}
	+\nu_{-}\left(e^{-\lambda}-1\right)\bra{}V_{-}e^{\lambda O_V}\ket{\Psi(t)}\\
	&\overset{\eqref{eq:ojcSqPO}}{=}
	\nu_{+}\left(e^{\lambda}-1\right)\bra{}e^{\lambda O_V}\ket{\Psi(t)}
	+\nu_{-}\left(e^{-\lambda}-1\right)\bra{}O_Ve^{\lambda O_V}\ket{\Psi(t)}\\
	&=\left(
	\nu_{+}\left(e^{\lambda}-1\right)+\nu_{-}\left(e^{-\lambda}-1\right)\tfrac{\partial}{\partial \lambda}
	\right)M_V(t;\lambda)\,.
\end{aligned}
\end{equation}
Supposing for simplicity an initial state $\ket{\Psi(0)}=\ket{G_0}$ (for $G_0\in \obj{\mathbf{Graph}_{fin}}$ some graph with $N_V$ vertices and $N_E$ edges), we  find that $M_V(0;\lambda)=\exp(\lambda N_V)$. The resulting initial value problem may be solved in closed-form via \emph{semi-linear normal-ordering} techniques known from the combinatorics literature~\cite{Dattoli:1997iz,blasiak2005boson,blasiak2011combinatorial,bdp2017} (see also~\cite{bp2019-ext,bdg2019}), and we obtain (for $t\geq 0$)
\begin{equation}\label{eq:MVsol}
M_V(t;\lambda)=\exp\left({\frac{\nu_{+}}{\nu_{-}}(e^{\lambda}-1)(1-e^{-\nu_{-}t})}\right)\left(1+(e^{\lambda}-1)e^{-\nu_{-}t}
\right)^{N_V}\,.
\end{equation}
In the limit $t\to\infty$, the moment-generating function becomes that of a \emph{Poisson-distribution} (of parameter $\nu_{+}/\nu_{-}$), thus confirming the aforementioned intuition that the vertex-counting observable has the dynamical behavior of a so-called \emph{birth-death process} (see e.g.\ \cite{bdp2017}).\\

Let us consider next the dynamics of the edge-counting observable $O_E:=\bO_{\TwoVertDirEdgeG[]}$, where for brevity we will only consider the evolution of the mean edge count. The calculation of the evolution equation for the expectation value of $O_E$ simplifies to the analogue of the so-called \emph{Ehrenfest equation},
\begin{equation}
\begin{aligned}
\tfrac{\partial}{\partial t}\bra{}O_E\ket{\Psi(t)}&=
\bra{}O_E \,H\ket{\Psi(t)}=\bra{}\big(H\, O_E+[O_E,H]\big)\ket{\Psi(t)}\,.
\end{aligned}
\end{equation}
Recalling that $\bra{}H=0$, it remains to compute the commutator $[O_E,H]$:
\begin{equation}
\begin{aligned}
[O_E,H]&=\nu_{+}[O_E,V_{+}]+\nu_{-}[O_E,V_{-}]+\varepsilon_{+}[O_E,E_{+}]+\varepsilon_{-}[O_E,V_{-}]\\
&=\nu_{+}\cdot 0 -\nu_{-}(E_{-}^{0,1}+E_{-}^{1,0})+\varepsilon_{+}E_{+}-\varepsilon_{-}E_{-}\\
E_{-}^{0,1}&=\rho_{\mathbf{FinGraph}}^{sq}\left(\delta\left(\OneVertGL[]{b}\leftarrow\OneVertGL[]{b}\rightarrow \TwoVertDirEdgeGLb{}{a}{b}\right)\right)\\
E_{-}^{1,0}&=\rho_{\mathbf{FinGraph}}^{sq}\left(\delta\left(\OneVertGL[]{a}\leftarrow\OneVertGL[]{a}\rightarrow \TwoVertDirEdgeGLb{}{a}{b}\right)\right)\,.
\end{aligned}
\end{equation}
This calculation is a representative example of various effects that may occur in rule-algebraic commutation relations: we find a zero commutator $[O_E,V_{+}]$, indicating the fact that application of the vertex creation rule $V_{+}$ does not influence the edge count. The commutators $[O_E,E_{\pm}]=\pm E_{\pm}$ encode that application of the edge creation/deletion rules leads to positive/negative contributions to the edge count. Finally, the contribution of the commutator $[O_E,V_{-}]=-E_{-}^{0,1}-E_{-}^{1,0}$ is given by the representations of two rule algebra elements not originally present in the Hamiltonian $H$, with the structure of the underlying linear rules indicated by the labels $a$ and $b$ on the vertices (as customary in the rewriting literature). It then remains to apply the jump-closure property (Theorem~\ref{thm:smfSqPO}) together with the  identity $\bO_{\OneVertG\; \OneVertG}=O_V(O_V-1)$ in order to obtain the \emph{evolution equation}
\begin{equation}\label{eq:evoEmean}
\tfrac{\partial}{\partial t}\bra{}O_E\ket{\Psi(t)}
=\varepsilon_{+}\bra{}O_V(O_V-1)\ket{\Psi(t)}-(\varepsilon_{-}+2\nu_{-})\bra{}O_E\ket{\Psi(t)}\,.
\end{equation}
Together with an initial condition such as e.g.\ $\ket{\Psi(0)}=\ket{G_0}$ for some (finite) directed graph $G_0$ with $N_V$ vertices and $N_E$ edges, and computing the closed-form expression for the first contribution in~\eqref{eq:evoEmean} from our previous solution~\eqref{eq:MVsol} (as $\partial_{\lambda}(\partial_{\lambda}-1)M_V(t;\lambda)$ followed by setting $\lambda\to0$), the initial value problem for the mean edge count evolution may be easily solved in closed form via the use of a computer algebra software such as \textsc{Maple}, \textsc{Mathematica} or \textsc{Sage}. It is also straightforward to verify that for an arbitrary initial state $\ket{\Psi(0)}=\ket{G_0}$, the limit value of the mean edge count for $t\to\infty$ reads
\begin{equation}
	\lim\limits_{t\to \infty}\bra{}O_E\ket{\Psi(t)}=\tfrac{\nu_{+}^2\varepsilon_{+}}{\nu_{-}^2(2\nu_{-}+\varepsilon_{-})}\,.
\end{equation}
Since the rates $\nu_{\pm}$ and $\varepsilon_{\pm}$ are free parameters, the above result entails that in this model one may freely adjust the limit value of the average vertex count as encoded in~\eqref{eq:FEQ} (whence $\nu_{+}/\nu_{-}$) as well as the limit value of the average edge count via suitable choices of the parameters $\varepsilon_{\pm}$. For illustration, we present some plots of the mean edge count evolution for the case $\ket{\Psi(0)}=\ket{\inObj}$ and various choices of parameters in Figure~\ref{fig:meanEdgeEvo}. 

\begin{figure}
  \centering
    \includegraphics[width=0.9\textwidth]{./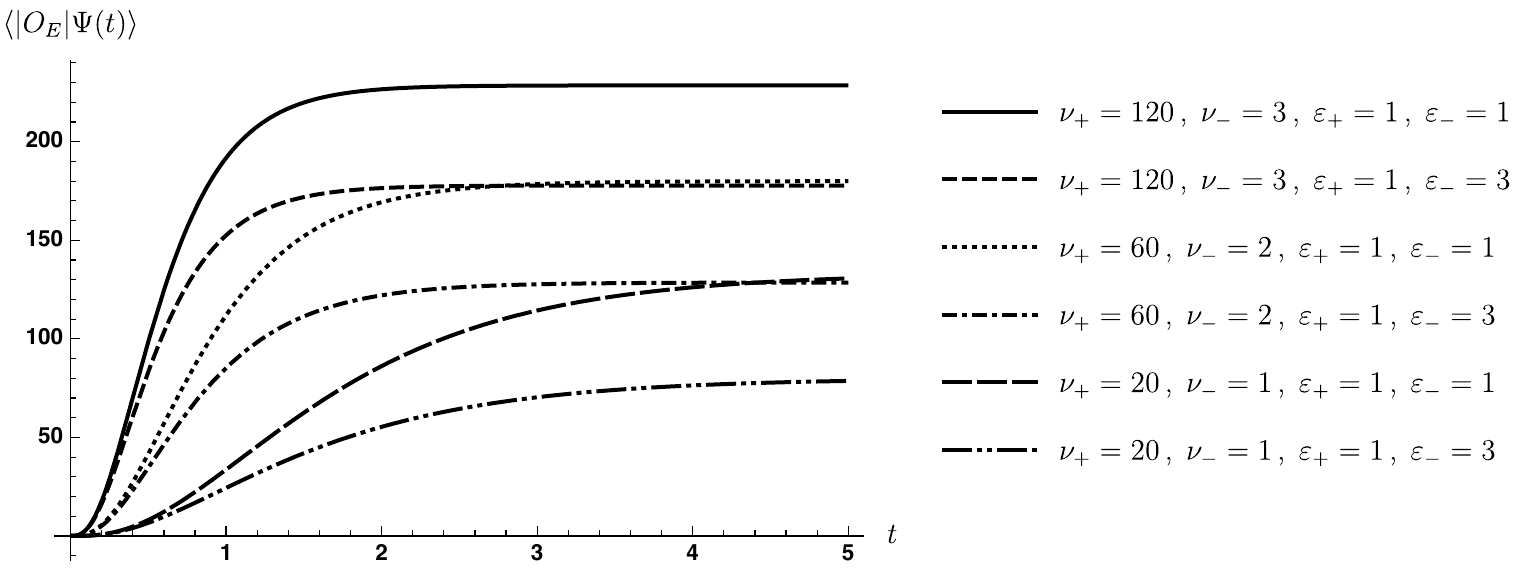}
     \caption{Time-evolution of $\bra{}O_E\ket{\Psi(t)}$ for $\ket{\Psi(0)}=\ket{\inObj}$.\label{fig:meanEdgeEvo}}
\end{figure}

\section{Conclusion and Outlook}

Extending our previous work on Double-Pushout (DPO) rewriting theories as presented in~\cite{bdg2016,bp2018,bdg2019} to the important alternative setting of Sesqui-Pushout (SqPO) rewriting, we provide a number of original results in the form of \emph{concurrency} and \emph{associativity} theorems for SqPO rewriting theories on adhesive categories. %
These fundamental results in turn permit us to formulate so-called \emph{SqPO-type rule algebras}, which play a central role in our novel \emph{universal stochastic mechanics framework}. %
We strongly believe that these contributions will provide fruitful grounds for further developments both in theory and practice of rewriting beyond the specialists' communities, especially in view of static analysis techniques~\cite{b2019c}.

%
%
%
\bibliographystyle{eptcs}
\bibliography{refs-NB-GCM-2019-proceedings.bib}

\appendix

\section{A collection of useful technical results on adhesive categories and final pullback complements}\label{app:lemList}

\textbf{Notational convention:} Here and as throughout this paper, while evidently category-theoretical constructions such as pushouts are only unique up to isomorphisms, we will typically nevertheless pick convenient representatives of the respective isomorphism classes to simplify our notations. As standard practice in the literature, we will thus e.g.\ fix the convention as in~\eqref{eq:lemSsqEqs} to choose representatives appropriately to label the pushout along an isomorphism with ``equality arrows'' (rather than keeping object labels generic and decorating the relevant arrow with a ``$\cong$'' symbol).

\begin{lemma}\label{lem:Main}
Let $\bfC$ be a category.
\begin{enumerate}
\item \emph{``Single-square'' lemmata~(see e.g.\ \cite{bp2019-ext}, Lem.~1.7):}
In any category, given commutative diagrams of the form
 \begin{equation}\label{eq:lemSsqEqs}
    \begin{mycd}
      A \ar[r,"f"] \ar[d,equal]\ar[dr,phantom,"(A)"] & B \ar[d,equal]\\
      A \ar[r,"f"'] & B
    \end{mycd}\qquad \begin{mycd}
      A \ar[r,equal] 
      \ar[d,equal]\ar[dr,phantom,"(B)"] & A \ar[d,hook,"g"]\\
      A \ar[r,hook,"g"'] & B
    \end{mycd}\qquad \begin{mycd}
      A \ar[r,"f"] 
      \ar[d,equal]\ar[dr,phantom,"(C)"] & 
      B \ar[d,hook,"g"]\\
      A \ar[r,"g\circ f"'] & C
    \end{mycd}\,,
  \end{equation}
\begin{enumerate}
\item $(A)$ is a pushout for arbitrary morphisms $f$\label{lem:fPOPB}, 
\item $(B)$ is a pullback if and only if the morphism $g$ is a monomorphism\label{lem:monoPB}, and
\item $(C)$ is a pullback for arbitrary morphisms $f$ if $g$ is a monomorphism\label{lem:idPB}.
\end{enumerate}
\item \emph{special adhesivity corollaries} (cf.\ e.g.\ \cite{EHRIG:2014ma}, Lemma~2.6): in any adhesive category,
\begin{enumerate}
	\item pushouts along monomorphisms are also pullbacks, and
	\item (\emph{uniqueness of pushout complements}) given a monomorphism $A\hookrightarrow C$ and a generic morphism $C\rightarrow D$, the respective pushout complement $A\rightarrow B \xhookrightarrow{b} D$ (if it exists) is unique up to isomorphism, and with $b\in \mono{\bfC}$ (due to stability of monomorphisms under pushouts).
\end{enumerate}
\item \emph{``Double-square lemmata''}: given commutative diagrams of the shapes
\begin{equation}
\vcenter{\hbox{\begin{mycd}
A
	\ar[d,"a"']\ar[dr,phantom,"(1)"] & 
B 
	\ar[l,"d"']\ar[d,"b" description]\ar[dr,phantom,"(2)"]  &
C
	\ar[l,"e"']\ar[d,"c"]\\
A' & 
B'\ar[l,"d'"] & C'\ar[l,"e'"]
\end{mycd}}}\qquad
\vcenter{\hbox{\begin{mycd}
Z
	\ar[d,"w"']\ar[dr,phantom,"(3)"] & 
Z' 
	\ar[d,"w'"]
	\ar[l,"z"']\\
Y \ar[d,"v"']\ar[dr,phantom,"(4)"] &
Y'	\ar[l,"y" description]
	\ar[d,"v'"]\\
X & 
X'\ar[l,"x"]
\end{mycd}}}
\end{equation}
then in any category $\bfC$ (cf.\ e.g.\ \cite{lack2005adhesive}):
\begin{enumerate}
\item \emph{Pullback-pullback (de-)composition}: If $(1)$ is a pullback, then $(1)+(2)$ is a pullback if and only if $(2)$ is a pullback.\label{lem:PBPBdec}
\item \emph{Pushout-pushout (de-)composition}: If $(2)$ is a pushout, then $(1)+(2)$ is a pushout if and only if $(1)$ is a pushout.\label{lem:POPOdec}
\end{enumerate}
If the category is adhesive: 
\begin{enumerate}\setcounter{enumii}{2}
\item \emph{pushout-pullback decomposition} (\cite{EHRIG:2014ma}, Lemma~2.6): If $(1)+(2)$ is a pushout, $(1)$ is a pullback, and if $d'\in \mono{\bfC}$ and ($c\in \mono{\bfC}$ or $e\in \mono{\bfC}$), then $(1)$ and $(2)$ are both pushouts (and thus also pullbacks).\label{lem:POPBdec}
\item \emph{pullback-pushout decomposition} (\cite{GOLAS2014}, Lem.~B.2): if $(1)+(2)$ is a pullback, $(2)$ a pushout, $(1)$ commutes and $a\in \mono{\bfC}$, then $(1)$ is a pullback.\label{lem:PBPOdec}
\item \emph{Horizontal FPC (de-)composition} (cf.\ \cite{Corradini_2006}, Lem.~2 and Lem.~3, compare~\cite{Loewe_2015}, Prop.~36):\footnote{It is worthwhile emphasizing that in these FPC-related lemmata, the ``orientation'' of the diagrams plays an important role. Moreover, the precise identity of the pair of morphisms that plays the role of the final pullback complement in a given square may be inferred from the ``orientation'' specified in the condition part of each statement.} If $(1)$ is an FPC (i.e.\ if $(d',b)$ is FPC of $(a,d)$), then $(1)+(2)$ is an FPC if and only if $(2)$ is an FPC.\label{lem:horFPCdec}
\item \emph{Vertical FPC (de-)composition} (ibid):\label{lem:vertFPCdec} if $(3)$ is an FPC (i.e.\ if $(y.w')$ is FPC of $(w,z)$), then 
\begin{enumerate}
\item if $(4)$ is an FPC (i.e.\ if $(x,v')$ is FPC of $(v,y)$), then $(3)+(4)$ is an FPC (i.e.\ $(x,v'\circ w')$ is FPC of $(v\circ w,z)$);
\item if $(3)+(4)$ is an FPC (i.e.\ if $(x,v'\circ w')$ is FPC of $(v\circ w,z)$), and if $(4)$ is a pullback, then $(4)$ is an FPC (i.e.\ $(x,v')$ is FPC of $(v,y)$).
\end{enumerate}
\item \emph{Vertical FPC-pullback decomposition} (compare~\cite{Loewe_2015}, Lem.~38): If $v\in \mono{\bfC}$, if $(4)$ is a pullback and if $(3)+(4)$ is an FPC (i.e.\ if $(x,v'\circ w')$ is FPC of $(v\circ w,z)$), then $(3)$ and $(4)$ are FPCs.\label{lem:vertFPCpbDec}
\item \emph{Vertical FPC-pushout decomposition}\footnote{We invite the interested readers to compare the precise formulation of the vertical FPC-pushout decomposition result to its concrete applications in the setting of the proof of the concurrency theorem, for which it has been specifically developed.}: If all morphisms of the squares $(3)$ and $(4)$ except $v$ are in $\mono{\bfC}$, if $v\circ w\in \mono{\bfC}$, if $(3)$ is a pushout and if $(3)+(4)$ is an FPC (i.e.\ if $(x,v'\circ w')$ is FPC of $(v\circ w,z)$), then $(4)$ is an FPC and $v\in \mono{\bfC}$.\label{lem:vertFPCpoDec}
\end{enumerate}
\end{enumerate}
\begin{proof}
Referring to the references above for the proofs of the (well-known) statements (where necessary by specializing the more general case of $\cM$-adhesive categories to the case of adhesive categories via setting $\cM$ to the class of all monomorphisms), it remains to prove our novel vertical FPC-pushout decomposition result. To this end, we first invoke pullback-pushout decomposition (Lemma \ref{lem:Main}\eqref{lem:PBPOdec}) in order to demonstrate that since $(3)+(4)$ is an FPC and thus also a pullback, and since $(3)$ is a pushout and since $x\in \mono{\bfC}$, $(4)$ is a pullback. By applying vertical FPC-pullback decomposition, we may conclude that $(4)$ is an FPC. In order to demonstrate that $v\in \mono{\bfC}$, construct the commutative cube below left:
\begin{equation}
\vcenter{\hbox{\includegraphics[scale=0.8]{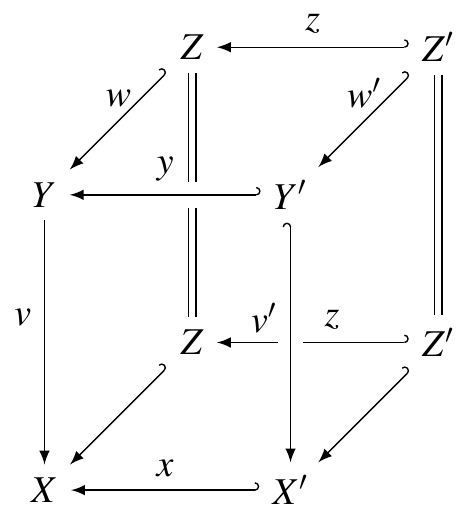}}}\qquad \qquad
\begin{mycd}\gdef\mycdScale{1}
& Z\ar[d,hook,"w"']\ar[ddl,bend right,hook',"v\circ w"'] & 
Z'\ar[dl,phantom,"(3)"]\ar[l,hook',"z"']\ar[d,hook,"w'"]\\
& Y\ar[dl,"v"'] & 
Y'\ar[l,hook',"y"]\ar[dll,bend left,hook,"v\circ y"]\\
X & &\\
\end{mycd}
\end{equation}
Since the bottom square is the FPC (and thus pullback) $(3)+(4)$, and since the right square is a pullback via Lemma~\ref{lem:Main}\eqref{lem:idPB} (because $v'\in \mono{\bfC}$), by pullback composition the square $\cSquare{Z',Z,X,Y'}$ (the right plus the bottom square) is a pullback. Thus assembling the commutative diagram as shown above right, since by assumption $(3)$ is a pushout and all arrows except $v$ are monomorphisms, invoking Theorem~\ref{thm:effUn} below permits to prove that also $v\in \mono{\bfC}$.
\end{proof}
\end{lemma}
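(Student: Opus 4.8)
The only part of Lemma~\ref{lem:Main} requiring a genuinely new argument is the vertical FPC--pushout decomposition; the remaining items are well-known and may be quoted from the cited references (specializing $\cM$-adhesive results to $\cM=\mono{\bfC}$). For the novel part, the plan is to reduce it to the already-established decomposition lemmata, treating the two conclusions---that $(4)$ is an FPC and that $v\in\mono{\bfC}$---in the logically convenient order: first the pullback property of $(4)$, then $v\in\mono{\bfC}$, and only then the FPC property of $(4)$. First I would observe that, being an FPC, the composite $(3)+(4)$ is in particular a pullback. Since $(3)$ is a pushout, $(4)$ commutes, and the ``outer'' morphism $x$ lies in $\mono{\bfC}$, the pullback--pushout decomposition (Lemma~\ref{lem:Main}\eqref{lem:PBPOdec}) immediately yields that $(4)$ is a pullback.

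The crux is to establish $v\in\mono{\bfC}$; this is the step I expect to be the main obstacle, as $v$ is the one morphism carrying no monomorphy hypothesis. My strategy is to exhibit $Z'$ as the intersection (pullback) inside $X$ of the two monomorphisms $v\circ w\colon Z\to X$ and $v\circ y=x\circ v'\colon Y'\to X$, and then to invoke effective unions. Concretely, the square $\cSquare{Z',Z,X,Y'}$ with cospan $Z\xrightarrow{v\circ w}X\xleftarrow{v\circ y}Y'$ is obtained by pasting the square with edges $w',\,\mathrm{id}_{Z'},\,v',\,v'\circ w'$---which is a pullback by Lemma~\ref{lem:Main}\eqref{lem:idPB} because $v'\in\mono{\bfC}$---onto the pullback $(3)+(4)$; pullback--pullback composition (Lemma~\ref{lem:Main}\eqref{lem:PBPBdec}) then shows $\cSquare{Z',Z,X,Y'}$ is a pullback. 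Its two legs into $X$ are indeed monic: $v\circ w\in\mono{\bfC}$ by hypothesis, and $v\circ y=x\circ v'$ is a composite of the monomorphisms $x$ and $v'$. Since $(3)$ realizes $Y$ as the pushout of the span $Z\hookleftarrow Z'\hookrightarrow Y'$, whose legs agree under $v\circ w$ and $v\circ y$ precisely along this intersection, the effective-unions theorem (Theorem~\ref{thm:effUn}) forces the induced mediating morphism $Y\to X$---which is exactly $v$---to be a monomorphism.

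With $v\in\mono{\bfC}$ now in hand, together with the facts that $(4)$ is a pullback and $(3)+(4)$ is an FPC, the vertical FPC--pullback decomposition (Lemma~\ref{lem:Main}\eqref{lem:vertFPCpbDec}) applies directly and delivers the remaining conclusion that $(4)$ is an FPC (and, as a by-product, reconfirms that the pushout $(3)$ is itself an FPC). I expect the effective-unions step to be where the real work sits: one must carefully verify the commutativity $v\circ y=x\circ v'$, check that the pasted square genuinely presents $Z'$ as the pullback of the two monic legs, and confirm that the hypotheses of Theorem~\ref{thm:effUn}---a pushout of monomorphisms together with a compatible monic cocone whose intersection is the given pullback---are met, so that monomorphy of $v$ may legitimately be transported across the pushout.
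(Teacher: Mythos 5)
Your proof is correct and follows essentially the same route as the paper's: pullback--pushout decomposition (Lemma~\ref{lem:Main}\eqref{lem:PBPOdec}) to show $(4)$ is a pullback, the pasting of the identity-square pullback (Lemma~\ref{lem:Main}\eqref{lem:idPB}, using $v'\in\mono{\bfC}$) onto the pullback $(3)+(4)$ to exhibit $\cSquare{Z',Z,X,Y'}$ as a pullback with monic legs $v\circ w$ and $v\circ y=x\circ v'$, effective unions (Theorem~\ref{thm:effUn}) applied to the pushout $(3)$ to conclude $v\in\mono{\bfC}$, and vertical FPC--pullback decomposition (Lemma~\ref{lem:Main}\eqref{lem:vertFPCpbDec}) for the FPC property of $(4)$. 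Your reordering---establishing $v\in\mono{\bfC}$ \emph{before} invoking vertical FPC--pullback decomposition---is in fact tidier than the paper's write-up, which invokes that lemma (whose stated hypotheses include $v\in\mono{\bfC}$) before monomorphy of $v$ has been verified.
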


The last part of the preceding proof relied upon one of the quintessential properties of adhesive categories in view of associative rewriting theories:
\begin{theorem}[Effective unions; \cite{lack2005adhesive}, Thm.~5.1]\label{thm:effUn}
In an adhesive category $\bfC$, given a commutative diagram such as  in the middle of~\eqref{eq:diags}, if all morphisms except the morphism $x$ are monomorphisms, if the square marked $(A)$ is a pushout and if the exterior square is a pullback, then $x$ is also a monomorphism.
\end{theorem}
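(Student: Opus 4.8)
The plan is to derive the statement directly from the van Kampen property of Definition~\ref{def:adhCats}, exploiting its standard consequence that in an adhesive category the pullback functor along \emph{any} morphism preserves pushouts along monomorphisms. In the notation of the middle diagram of~\eqref{eq:diags}, I write the pushout $(A)$ as the pushout of a span $A\hookleftarrow C\hookrightarrow B$ of monomorphisms, with pushout object $P$ and injections $f\colon A\to P$ and $g\colon B\to P$ (both monic, since monomorphisms are stable under pushout); I denote by $D$ the remaining corner of the exterior square, with monic legs $\alpha\colon A\to D$ and $\beta\colon B\to D$, and by $x\colon P\to D$ the comparison morphism satisfying $x\circ f=\alpha$ and $x\circ g=\beta$. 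The exterior-square hypothesis then reads $C\cong A\times_D B$. To prove $x\in\mono{\bfC}$ it suffices to show that the first projection $\pi_1\colon P\times_D P\to P$ of the kernel pair of $x$ is an isomorphism, since a diagonal splitting $\delta$ with $\pi_1\circ\delta=\pi_2\circ\delta=\mathrm{id}_P$ then forces $\delta=\pi_1^{-1}$ and hence $\pi_2=\pi_1$.

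The engine of the argument is the following lemma, which I would first record: if $S$ is a pushout square of monomorphisms all of whose objects carry compatible morphisms into an object $D$ (so that $S$ is a pushout in the slice $\bfC/D$), then for every $t\colon T\to D$ the pullback of $S$ along $t$ is again a pushout. I obtain this by erecting a cube over $S$ whose top face has the pulled-back objects $W\times_D T$ as corners and whose vertical maps are the first-factor projections $W\times_D T\to W$: the back and right faces are pullbacks by construction, pullback pasting (Lemma~\ref{lem:Main}\eqref{lem:PBPBdec}) makes the front and left faces pullbacks as well, and the van Kampen property then yields that the top face is a pushout.

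With this tool in hand, I carry out two pullback computations. First, pulling the $D$-cocone of $(A)$ back along $\alpha\colon A\to D$ produces a pushout whose corners are $A\times_D A\cong A$ (as $\alpha$ is monic), $C\times_D A\cong C$, and---crucially using the exterior-square hypothesis $A\times_D B\cong C$---also $B\times_D A\cong C$; the resulting span is $A\xleftarrow{m}C\xrightarrow{\cong}C$, whose pushout is $A$. Tracking the $T$-projections as the cocone of this pushout shows that the second-factor projection $P\times_D A\to A$ is an isomorphism, and hence so is $A\times_D P\to A$; symmetrically $B\times_D P\to B$ and $C\times_D P\to C$ are isomorphisms. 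Second, pulling the $D$-cocone of $(A)$ back along $x\colon P\to D$ exhibits $P\times_D P$ as the pushout of $(A\times_D P)\hookleftarrow(C\times_D P)\hookrightarrow(B\times_D P)$, with $\pi_1$ the induced comparison to the pushout $P$ of $A\hookleftarrow C\hookrightarrow B$. Since the three corner projections are isomorphisms compatible with the span maps, $\pi_1$ is an isomorphism, and the reduction above gives $x\in\mono{\bfC}$.

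The main obstacle is the bookkeeping in the first computation: one must verify that the isomorphisms $A\times_D A\cong A$, $B\times_D A\cong C$ and $C\times_D A\cong C$ are \emph{natural} with respect to the span maps and the $T$-projections, so that the top pushout is genuinely the pushout of $A\xleftarrow{m}C\xrightarrow{\cong}C$ and its comparison map to $A$ is invertible; only then may one legitimately conclude that the first-factor projection $A\times_D P\to A$ is an isomorphism, rather than merely that its source and target are abstractly isomorphic. A secondary care point is the element-free justification of the slice-wise pushout-preservation lemma, which must be argued purely through the cube and the van Kampen square rather than by the pointwise reasoning used above as a guide.
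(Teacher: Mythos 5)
Your proof is correct, but note that the paper itself offers no proof of this statement: it is imported verbatim as Thm.~5.1 of \cite{lack2005adhesive}, so the only benchmark is that cited source. Measured against it, your argument is essentially the original one---reduce monicity of $x$ to triviality of its kernel pair, extract pullback-stability of pushouts along monomorphisms from the van Kampen property via the cube construction, and identify the pulled-back corners ($A\times_D A\cong A$ since $\alpha$ is monic, $B\times_D A\cong C$ via the exterior pullback, $C\times_D A\cong C$) naturally enough that the comparison $P\times_D P\to P$ coincides with the kernel-pair projection and is an isomorphism---so there is nothing to correct, and the two care points you flag are precisely where the published proof does the same bookkeeping.
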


The following result provides several important facts on FPCs.

\begin{lemma}[cf.\ \cite{Loewe_2015}, Fact~2, and~\cite{Corradini_2006}, Lemma~2 and Proposition~2]\label{lem:FPCfacts}
	Let $\bfC$ be adhesive. For an arbitrary morphism $f:A\rightarrow B$, $(id_B,f)$ is an FPC of $(f,id_A)$ and vice versa. Moreover, every pushout square along monomorphisms is also an FPC square. FPCs are unique up to isomorphism and preserve monomorphisms. The latter property entails that if $C\xleftarrow{d}D\xleftarrow{b}A$ is the FPC of $C\xleftarrow{c}B\xleftarrow{a}A$ and if $a\in \mono{\bfC}$, then also $d\in \mono{\bfC}$ and vice versa (while $c\in \mono{\bfC}$ entails that $b\in \mono{\bfC}$ by stability of monomorphisms under pullbacks in an adhesive category $\bfC$).
\end{lemma}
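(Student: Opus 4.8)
The statement bundles four essentially independent facts about the final pullback complement (FPC) square in which $(d,b)$ is the FPC of $(c,a)$, with $a\colon A\to B$, $c\colon B\to C$, $b\colon A\to D$, $d\colon D\to C$ and $c\circ a = d\circ b$; by definition this square is a pullback (clause (i)) and enjoys the finality property (clause (ii)). My plan is to dispatch the identity, pushout and uniqueness claims quickly and to spend the real effort on mono-preservation.

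For the identity claim I would merely unfold the definition. To see that $(id_B,f)$ is an FPC of $(f,id_A)$, clause (i) asks that $(id_A,f)$ be the pullback of the cospan $A\xrightarrow{f}B\xleftarrow{id_B}B$, which holds since pulling back along the isomorphism $id_B$ is trivial, and clause (ii) collapses: the hypothesis $a\circ w = x$ reads $w = x$, the only candidate for $w^{*}$ is $z$, and its required equation $z\circ y = f\circ x$ is exactly the commutativity of the test pullback of $(f,z)$, with uniqueness automatic. The converse, that $(f,id_A)$ is an FPC of $(id_B,f)$, is symmetric, using that pulling back along $id_B$ makes the test leg $y$ invertible so that $w^{*}$ is forced.

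That every pushout along a monomorphism is an FPC is where I would lean on the literature. Clause (i) is free, since in an adhesive category a pushout along a monomorphism is also a pullback (Lemma~\ref{lem:Main}, special adhesivity corollary). Clause (ii) is the one genuinely categorical step: I would produce the finality witness $w^{*}$ by pulling the pushout square back along $z$ and transporting the universal factorization through the van Kampen property, which is precisely the diagram chase carried out in \cite{Corradini_2006} and \cite{Loewe_2015}, to which I would defer. Uniqueness of FPCs is then the standard universal-property argument: given two FPCs $(d,b)$ and $(d',b')$ of $(c,a)$, applying the finality of each (with test datum $z=d'$ respectively $z=d$ and $w=id_A$) yields comparison morphisms $D'\to D$ and $D\to D'$ whose composites satisfy the defining equations of the finality witness for $z=d$ (resp.\ $z=d'$); by the uniqueness clause these composites are identities, so the comparison morphisms are mutually inverse.

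The real content is mono-preservation, and here I expect the crux to lie. The downward directions are immediate from clause (i): since $(a,b)$ is the pullback of $(c,d)$, the morphism $a$ is a base-change of $d$ and $b$ is a base-change of $c$, so $d\in\mono{\bfC}\Rightarrow a\in\mono{\bfC}$ and $c\in\mono{\bfC}\Rightarrow b\in\mono{\bfC}$ by stability of monomorphisms under pullbacks. The substantive direction is $a\in\mono{\bfC}\Rightarrow d\in\mono{\bfC}$, which I would prove directly. Let $u,v\colon Z\to D$ satisfy $d\circ u = d\circ v =: z$, and form the pullback $(x,y)$ of $(c,z)$. Since $c\circ x = z\circ y = d\circ(u\circ y)$, the pullback property of the FPC square supplies a unique $w\colon W\to A$ with $a\circ w = x$ and $b\circ w = u\circ y$; applying clause (ii) to this $w$ yields the unique $w^{*}$ with $d\circ w^{*} = z$ and $w^{*}\circ y = b\circ w$, and since $u$ satisfies both equations we conclude $u = w^{*}$. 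Repeating the construction for $v$ produces $w'\colon W\to A$ with $a\circ w' = x$ and $v = (w')^{*}$. Now $a\in\mono{\bfC}$ together with $a\circ w = x = a\circ w'$ forces $w = w'$, so $u$ and $v$ are the finality witness of one and the same datum, whence $u = v$ and $d$ is monic. The point to watch is exactly this last collapse: it is the only place where injectivity of $a$ is used, and arranging the construction of $w$ so that it feeds correctly into the uniqueness clause of (ii) is the heart of the matter.
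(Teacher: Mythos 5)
Your proposal is correct, but it is worth noting that the paper itself offers no proof of this lemma at all: it is stated purely by citation to \cite{Loewe_2015} (Fact~2) and \cite{Corradini_2006} (Lemma~2 and Proposition~2), so your write-up is a reconstruction rather than a parallel to an in-paper argument. Your reconstruction is sound throughout: the identity-FPC verifications correctly exploit that $a\circ w=x$ collapses to $w=x$ (resp.\ that the test leg $y$ becomes invertible when $c=id_B$), the uniqueness argument is the standard two-way application of finality with $w=id_A$ followed by the uniqueness clause applied to the FPC's own square, and the mono-preservation argument is the genuinely substantive piece and checks out line by line --- given $d\circ u=d\circ v=z$, both $u$ and $v$ are exhibited as the finality witness of the same test datum $(x,y,z,w)$ once $a\in\mono{\bfC}$ forces $w=w'$. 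Two observations on what your route buys compared with simply citing, as the paper does. First, your direct argument for $a\in\mono{\bfC}\Rightarrow d\in\mono{\bfC}$ uses only the existence of pullbacks and the FPC universal property, not adhesivity, so it is more general than the ambient Assumption~\ref{ass:SqPO} requires; the same holds for the downward directions, since monomorphisms are stable under pullback in any category. Second, the one clause where adhesivity genuinely enters --- that pushouts along monomorphisms are FPCs, via the van Kampen property --- is exactly the clause you defer to the same sources the paper cites; your sketch (pull the pushout square back along $z$, use VK to recognize the top face as a pushout, then extract $w^{*}$ from its universal property) points at the right mechanism, but it is the one place where your text, like the paper's, stops short of a self-contained proof.
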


For concreteness, we quote the following explicit construction of FPCs in the category $\mathbf{Graph}$ of directed multigraphs:

\begin{lemma}[FPCs for graphs; \cite{Corradini_2006}, Sec.~4.1 and Construction~5]\label{lem:GraphFPC}
	Let $\mathbf{Graph}$ denote the adhesive category of \emph{directed multigraphs}, with
	\begin{itemize}
	\item $\obj{\mathbf{Graph}}$: \emph{(multi-)graphs}, i.e.\ tuples $G=(V_G,E_G,\mathsf{src}_G:E_G\rightarrow V_G,\mathsf{trg}_G:E_G\rightarrow V_G)$, with $V_G$ the set of vertices, $E_G$ the set of edges (with $V_G\cap E_G=\emptyset$), $\mathsf{src}_G$ the source and $\mathsf{trg}_G$ the target maps
	\item $\mor{\mathbf{Graph}}$: \emph{graph homomorphisms} $f:G\rightarrow H$, specified in terms of pairs of morphisms $(f_V:V_G\rightarrow V_H,f_E:E_G\rightarrow E_H)$ such that $\mathsf{src}_H\circ f_E=f_V\circ \mathsf{src}_G$ and $\mathsf{trg}_H\circ f_E=f_V\circ \mathsf{trg}_G$.
\end{itemize}
Let $\mono{\mathbf{Graph}}$ denote the class of all injective graph morphisms. Then for every composable pair of monomorphisms $K\xhookrightarrow{i}I\xhookrightarrow{m}X$, the FPC exists and is constructed explicitly as\footnote{The quoted Construction~5 of~\cite{Corradini_2006} is slightly more general, in that the morphism $m$ may be permitted to not be a monomorphism; we will however have no application for such a generalization in our framework.} $K\xhookrightarrow{m\vert K}\overline{K}\xhookrightarrow{\subseteq}X$, where $\xrightarrow{\subseteq}$ denotes an inclusion morphism, and where the graph $\overline{K}$ reads
\begin{equation}
\begin{aligned}
	V_{\overline{K}}&=V_X\setminus m[V_I\setminus V_K]\\
	E_{\overline{K}}&=\{
		e\in E_X\setminus m[E_I\setminus E_K]\vert \mathsf{src}_X(e)\in V_{\overline{K}}\land 
		\mathsf{trg}_X(e)\in V_{\overline{K}}
	\}\,.
\end{aligned}
\end{equation}
\end{lemma}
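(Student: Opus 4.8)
The plan is to work entirely set-theoretically, exploiting the concreteness of $\mathbf{Graph}$, and to verify directly that the explicitly constructed cospan $K\xhookrightarrow{m|K}\overline{K}\xhookrightarrow{\subseteq}X$ satisfies the two defining conditions of an FPC for the composable pair $K\xhookrightarrow{i}I\xhookrightarrow{m}X$. First I would check well-definedness: that $\overline{K}$ is a genuine subgraph of $X$ (the endpoint condition built into $E_{\overline{K}}$ makes it closed under $\mathsf{src}_X,\mathsf{trg}_X$), that the inclusion $i'\colon\overline{K}\hookrightarrow X$ is a monomorphism, and that the composite $m\circ i\colon K\to X$ corestricts to a morphism $k:=m|K\colon K\to\overline{K}$. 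The latter amounts to observing that injectivity of $m$ forces $m[i[V_K]]\cap m[V_I\setminus i[V_K]]=\emptyset$, so the images of $K$-vertices and $K$-edges avoid the deleted sets; since $m\circ i$ is a composite of monomorphisms, $k$ is again monic, and commutativity $i'\circ k=m\circ i$ is immediate.

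Next, condition (i): that $(i,k)$ is a pullback of $(m,i')$. Computing the pullback of $I\xrightarrow{m}X\xleftarrow{i'}\overline{K}$ pointwise, an element pairs $v_I\in V_I$ with $w\in V_{\overline{K}}$ precisely when $m(v_I)=w$; since $w\in V_{\overline{K}}$ iff $m(v_I)\notin m[V_I\setminus V_K]$ iff (by injectivity of $m$) $v_I\in i[V_K]$, the pullback vertex set is canonically $V_K$, and an identical argument handles the edges. Hence $K$ equipped with $(i,k)$ realizes this pullback.

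The main part is the \emph{finality} condition (ii): given test data $z\colon W\to X$, with $(x,y)$ the pullback of $(m,z)$ and $w$ satisfying $i\circ w=x$, I must produce a unique $w^{*}\colon W\to\overline{K}$ with $i'\circ w^{*}=z$ and $w^{*}\circ y=k\circ w$. The crux is showing that $z$ factors through the subgraph $\overline{K}$. I would describe the pullback $T$ of $(m,z)$ as $\{(v_I,v_W)\mid m(v_I)=z(v_W)\}$ with $x$ the projection to $I$; the hypothesis $i\circ w=x$ then says each such $v_I$ lies in $i[V_K]$, so that whenever $z(v_W)\in m[V_I]$ one has $z(v_W)\in m[i[V_K]]\subseteq V_{\overline{K}}$, while if $z(v_W)\notin m[V_I]$ then a fortiori $z(v_W)\notin m[V_I\setminus V_K]$ and again $z(v_W)\in V_{\overline{K}}$. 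The same pullback argument rules out $z$ hitting $m[E_I\setminus E_K]$ on edges, and since $z$ is a graph homomorphism with $z[V_W]\subseteq V_{\overline{K}}$ every edge image automatically meets the endpoint condition defining $E_{\overline{K}}$; thus $z[W]\subseteq\overline{K}$ and $z$ factors as $i'\circ w^{*}$. Uniqueness of $w^{*}$ follows because $i'$ is monic, and the compatibility $w^{*}\circ y=k\circ w$ is obtained by precomposing both sides with the mono $i'$ and using $i'\circ k=m\circ i$, $i\circ w=x$, together with $m\circ x=z\circ y$ from the pullback square of $(m,z)$.

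The main obstacle is precisely this finality step: one must recognize that the sole information available about the test cone -- the existence of $w$ with $i\circ w=x$ -- is exactly what forces the image of $z$ to avoid the deleted vertices and edges, including the ``dangling'' edges removed in unknown context. Getting the case analysis on $z(v_W)$ right (landing inside $m[i[V_K]]$ versus outside $m[V_I]$ altogether) and confirming that the homomorphism property of $z$ automatically respects the endpoint-closure defining $E_{\overline{K}}$ is the technically delicate point; everything else reduces to injectivity of $m$ and the mono-cancellation afforded by the inclusion $i'$.
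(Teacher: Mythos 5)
Your verification is correct. Note that the paper itself offers no proof of this lemma: it is quoted as a known result from \cite{Corradini_2006} (Sec.~4.1 and Construction~5), where the construction is in fact carried out in greater generality (the morphism $m$ need not be monic, cf.\ the paper's footnote), so your direct set-theoretic argument supplies a self-contained check that the paper delegates to the literature. Your handling of the finality condition is exactly the right mechanism: the sole hypothesis $i\circ w=x$ on the test cone is what forces the image of $z$ to avoid $m[V_I\setminus V_K]$ and $m[E_I\setminus E_K]$, the homomorphism property of $z$ together with $z[V_W]\subseteq V_{\overline{K}}$ discharges the endpoint-closure clause in the definition of $E_{\overline{K}}$, and uniqueness plus the compatibility $w^{*}\circ y=k\circ w$ reduce to cancellation along the mono $i'$. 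Two micro-points you should spell out rather than leave to ``an identical argument'': in the well-definedness step, $m[i[E_K]]\subseteq E_{\overline{K}}$ requires not only avoidance of the deleted edge set but also that the endpoints $m(i(\mathsf{src}_K(e)))$ and $m(i(\mathsf{trg}_K(e)))$ lie in $m[i[V_K]]\subseteq V_{\overline{K}}$; and in condition (i), the backward inclusion of the edge computation ($e_I\in i[E_K]\Rightarrow m(e_I)\in E_{\overline{K}}$) uses precisely this fact, so the edge case is not literally identical to the vertex case. Neither point affects correctness; your proof is complete once they are made explicit.
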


\section{Proofs}

\subsection{Proof of the SqPO concurrency theorem}\label{app:SqPOconcur}

\begin{proof}
Throughout this proof, in each individual constructive step it may be verified that due to the stability of monomorphisms under pullbacks and pushouts, due to the various decomposition lemmata provided in the form of Lemma~\ref{lem:Main}, and on occasion due to Theorem~\ref{thm:effUn} on effective unions in adhesive categories, all morphisms induced in the ``Synthesis'' and ``Analysis'' steps are in fact monomorphisms. For better readability, we will not explicitly mention the individual reasoning steps on this point except for a few intricate sub-steps, since they may be recovered in a straightforward manner.\\

--- \textbf{Synthesis:} %
	Consider the setting presented in~\eqref{eq:CTs1}. %
	Here, we have obtained the candidate match ${\color{h1color}\mathbf{n}}=(I_2{\color{h1color}\leftarrow M_{21}\rightarrow} O_1)$ via pulling back the cospan $(I_2{\color{h1color}\rightarrow} X_1{\color{h1color}\leftarrow} O_1)$. %
	Next, we construct ${\color{h1color}N_{21}}$ via taking the pushout of ${\color{h1color}\mathbf{n}}$, which induces a unique arrow ${\color{h1color}N_{21}\rightarrow}X_1$ that is according to Theorem~\ref{thm:effUn} a monomorphism. %
	The diagram in~\eqref{eq:CTs2} is obtained by taking the pullbacks of the spans $\overline{K}_i\rightarrow X_1{\color{h1color}\leftarrow N_{21}}$ (obtaining the objects $K_i'$, for $i=1,2$). %
	By virtue of pushout-pullback decomposition (Lemma~\ref{lem:Main}\eqref{lem:POPBdec}), the squares $\cSquare{K_1',\overline{K}_1,X_1,{\color{h1color}N_{21}}}$ and $\cSquare{K_1,K_1',{\color{h1color}N_{21}},O_1}$ are pushouts. %
	Invoking vertical FPC-pullback decomposition (Lemma~\ref{lem:Main}\eqref{lem:vertFPCpbDec}), the squares $\cSquare{K_2',{\color{h1color}N_{21}},X_1,\overline{K}_2}$ and $\cSquare{K_2,I_2,{\color{h1color}N_{21}},K_2'}$ are FPCs. %
	Next, letting ${\color{h2color}O_{21}}:=\pO{O_2\leftarrow K_2\rightarrow K_2'}$ and ${\color{h2color}I_{21}}:=\pO{O_1\leftarrow K_1\rightarrow K_1'}$, we have via vertical FPC-pushout decomposition (Lemma~\ref{lem:Main}\eqref{lem:vertFPCpoDec}) that the resulting two squares on the very right (the ones involving ${\color{h2color}I_{21}}$) are FPCs and that the arrow ${\color{h2color}I_{21}\rightarrow}X_0$ is a monomorphism, while pushout-pushout decomposition (Lemma~\ref{lem:Main}\eqref{lem:POPOdec}) entails that the two newly formed squares on the very left (the ones involving ${\color{h2color}O_{21}}$) are pushouts. %

	The final step as depicted in~\eqref{eq:CTs3} consists in constructing ${\color{h2color}K_{21}}=\pB{K_2'\rightarrow {\color{h1color}N_{21}}\leftarrow K_1'}$ and ${\color{h2color}\overline{K}_{21}}=\pB{\overline{K_2}\rightarrow X_1\leftarrow \overline{K_1}}$, which by universality of pullbacks induces a unique arrow ${\color{h2color}K_{21}\rightarrow \overline{K}_{21}}$. %
	By invoking pullback-pullback decomposition (Lemma~\ref{lem:Main}\eqref{lem:PBPBdec}), one may show that the squares $\cSquare{{\color{h2color}K_{21}},{\color{h2color}\overline{K}_{21}},\overline{K}_i,K_i'}$ (for $i=1,2$) are pullbacks. %
	Since the square $\cSquare{K_1',\overline{K}_1,X_1,{\color{h1color}N_{21}}}$ is a pushout, via the van Kampen property (cf.\ Def.~\ref{def:adhCats}) the square $\cSquare{{\color{h2color}K_{21}},{\color{h2color}\overline{K}_{21}},\overline{K}_2,K_2'}$ is a pushout. %
	Since according to Lemma~\ref{lem:FPCfacts} pushouts are also FPCs, it follows via horizontal composition of FPCs (Lemma~\ref{lem:Main}\eqref{lem:horFPCdec}) that the square $\cSquare{{\color{h2color}K_{21}},{\color{h2color}\overline{K}_{21}},X_1,{\color{h1color}N_{21}}}$ is an FPC. %
	Noting that the pushout square $\cSquare{K_1',\overline{K}_1,X_1,{\color{h1color}N_{21}}}$ is an FPC as well, it follows via horizontal decomposition of FPCs (Lemma~\ref{lem:Main}\eqref{lem:horFPCdec}) that $\cSquare{{\color{h2color}K_{21}},{\color{h2color}\overline{K}_{21}},\overline{K}_1,K_1'}$ is an FPC. %
	Thus the claim follows by invoking pushout composition (Lemma~\ref{lem:Main}\eqref{lem:POPOdec}) and horizontal FPC composition (Lemma~\ref{lem:Main}\eqref{lem:horFPCdec}) in order to obtain the pushout square $\cSquare{{\color{h1color}K_{21}},{\color{h1color}\overline{K}_{21}},X_2,{\color{h1color}O_{21}}}$ and the FPC square $\cSquare{{\color{h1color}K_{21}},{\color{h1color}\overline{K}_{21}},X_0,{\color{h1color}I_{21}}}$.

	--- \textbf{Analysis:} Given the setting as depicted in~\eqref{eq:CTa1} of Figure~\ref{fig:CTa}, where the top row has the structure of an SqPO-composition (compare~\eqref{eq:SqPOccomp}), where the square $\cSquare{{\color{h2color}K_{21}},K_1',{\color{h1color}N_{21}},K_2'}$ is a pullback, the left ``curvy'' bottom square a pushout and the right ``curvy'' bottom square an FPC, we may obtain the configuration of~\eqref{eq:CTa3} as follows: %
	construct\footnote{Note that it is precisely in this step and the following step that we require the existence of FPCs for arbitrary pairs of monomorphisms as per Assumption~\ref{ass:SqPO}.} $\overline{K}_1$ via taking the final pullback complement of $K_1'\rightarrow I_{21}\rightarrow X_0$ (which implies the existence of an arrow ${\color{h2color}\overline{K}_{21}}\rightarrow \overline{K}_1$ via the FPC property). %
	Note in particular that according to Lemma~\ref{lem:FPCfacts}, both arrows constructed via forming the aforementioned FPC are monomorphisms, and thus by stability of monomorphisms in an adhesive category (compare Definition~\ref{def:adhCats} and Lemma~\ref{lem:Main}\eqref{lem:idPB}), the arrow ${\color{h2color}\overline{K}_{21}}\rightarrow \overline{K}_1$ is a monomorphism as well. %
	Next, take the pushout $X_1=\pO{\overline{K}_1\leftarrow K_1'\rightarrow {\color{h1color}N_{21}}}$, followed by constructing $\overline{K}_2$ as the final pullback complement of $K_2'\rightarrow {\color{h1color}N_{21}}\rightarrow X_1$ (which implies due to the FPC property of the resulting square $\cSquare{K_2',\overline{K}_2,X_1,{\color{h1color}N_{21}}}$ the existence of an arrow ${\color{h2color}\overline{K}_{21}}\rightarrow \overline{K}_2$). %
	Invoking pullback-pullback decomposition (Lemma~\ref{lem:Main}\eqref{lem:PBPBdec}) twice, followed by the van Kampen property (Def.~\ref{def:adhCats}), we may conclude that the square $\cSquare{{\color{h2color}K_{21}},{\color{h2color}\overline{K}_{21}}, \overline{K}_2,K_2'}$ is a pushout. Thus invoking pushout-pushout decomposition (Lemma~\ref{lem:Main}\eqref{lem:POPOdec}), we find that also $\cSquare{K_2',\overline{K}_2,X_2,O_{21}}$ is a pushout. %
	We finally arrive at the configuration in~\eqref{eq:CTa4} via composition of pushout and FPC squares, respectively, thus concluding the proof.
\end{proof}

\begin{figure}[ht!]
\begin{subequations}
\begin{align}
	\vcenter{\hbox{\includegraphics[scale=0.6,page=1]{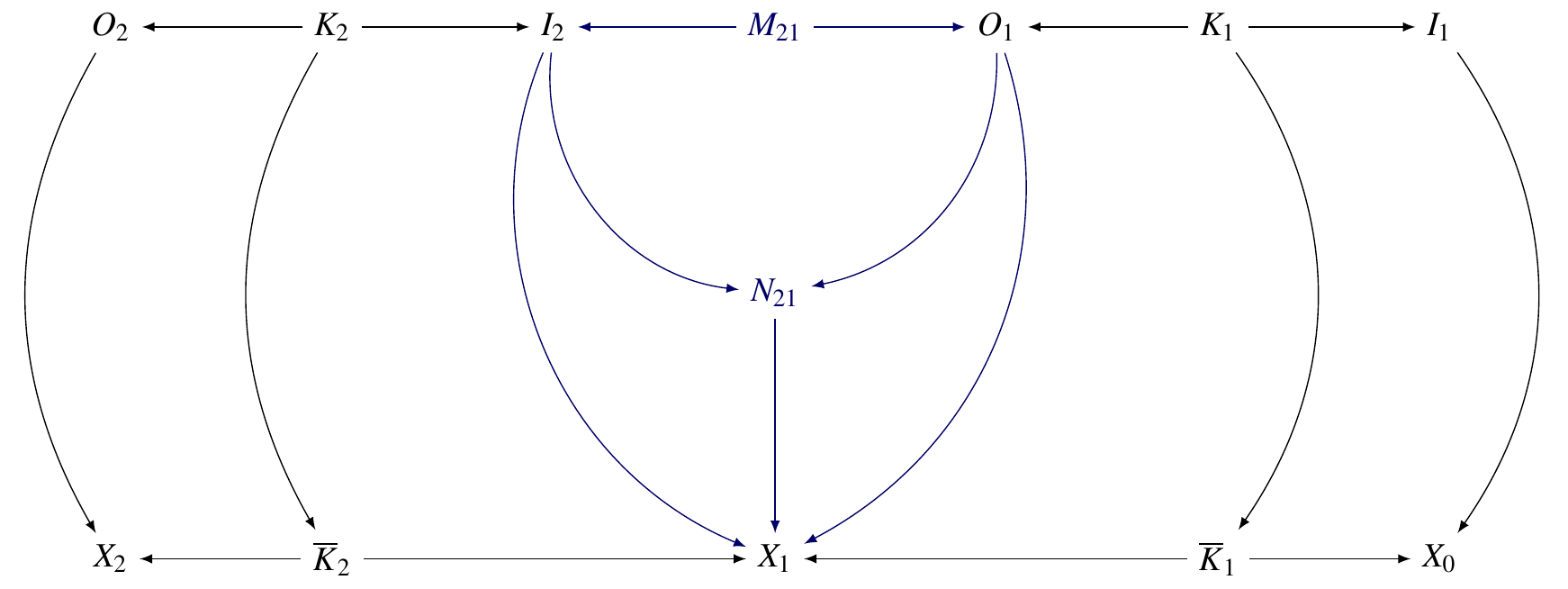}}}\label{eq:CTs1}\\
	\vcenter{\hbox{\includegraphics[scale=0.6,page=2]{images/concurrency-proof.pdf}}}\label{eq:CTs2}\\
	\vcenter{\hbox{\includegraphics[scale=0.6,page=3]{images/concurrency-proof.pdf}}}\label{eq:CTs3}
\end{align}
\end{subequations}
\caption{\label{fig:CTs} \emph{Synthesis} part of the concurrency theorem.}
\end{figure}

\begin{figure}[ht!]
\begin{subequations}
\begin{align}
	\vcenter{\hbox{\includegraphics[scale=0.6,page=1]{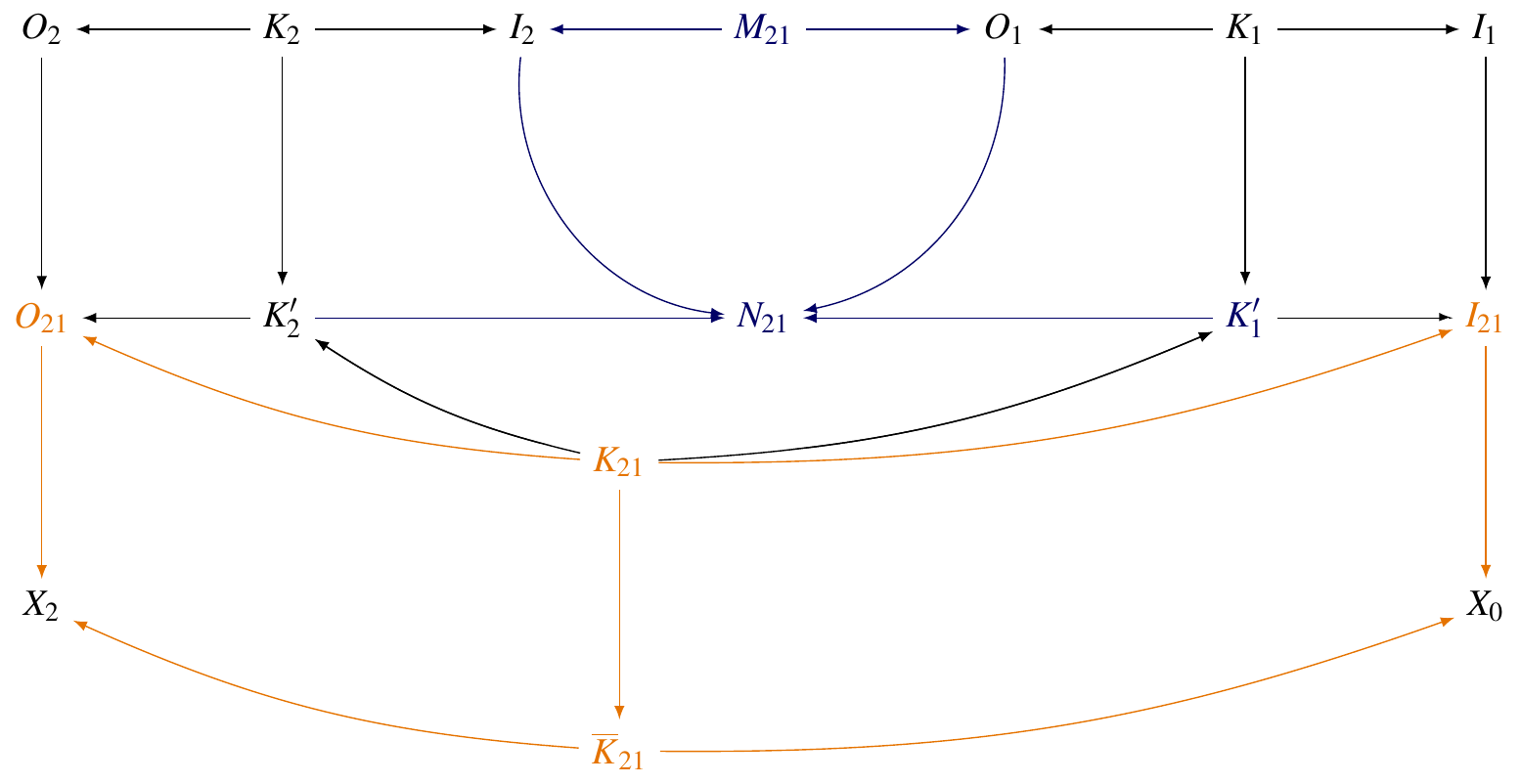}}}\label{eq:CTa1}\\
	\vcenter{\hbox{\includegraphics[scale=0.6,page=2]{images/concurrency-proof-Analysis.pdf}}}\label{eq:CTa2}\\
	\vcenter{\hbox{\includegraphics[scale=0.6,page=3]{images/concurrency-proof-Analysis.pdf}}}\label{eq:CTa3}\\
	\vcenter{\hbox{\includegraphics[scale=0.6,page=4]{images/concurrency-proof-Analysis.pdf}}}\label{eq:CTa4}
\end{align}
\end{subequations}
\caption{\label{fig:CTa} \emph{Analysis} part of the concurrency theorem.}
\end{figure}

\subsection{Proof of the SqPO associativity theorem}\label{app:SqPOassoc}

Our proof strategy will be closely related to the one presented in~\cite{bp2018} (with full technical details provided in~\cite{bp2019-ext}) for the analogous associativity theorem in the DPO-rewriting case. However, the SqPO-type case poses considerable additional challenges, since this rewriting semantics yields diagrams of a rather heterogeneous nature (including pullbacks, pushouts, pushout complements and FPCs) as compared to the DPO case, and in addition unlike DPO-type rule compositions, SqPO-type compositions are not reversible in general, which necessitates an independent proof of both directions of the bijective correspondence.
\begin{proof}
We first prove the claim in the ``$\Rightarrow$'' direction, i.e.\ starting from the set of data
\begin{equation}
\begin{aligned}
	{\color{h1color}\mathbf{m}_{21}}
		&=(O_2{\color{h1color}\leftarrow M_{21}\rightarrow }I_1)\in \sqRMatch{p_2}{p_1}\\
	{\color{h1color}\mathbf{m}_{3(21)}}
		&=(O_3{\color{h1color}\leftarrow M_{3(21)}\rightarrow }I_{21})\in \sqRMatch{p_3}{p_{21}}\,,\qquad p_{21}=\sqComp{p_2}{{\color{h1color}\mathbf{m}_{21}}}{p_1}\,,
\end{aligned}
\end{equation}
we have to demonstrate that one may uniquely (up to isomorphisms) construct from this information a pair of admissible matches
\begin{equation}
\begin{aligned}
	{\color{h1color}\mathbf{m}_{32}}
		&=(O_3{\color{h1color}\leftarrow M_{32}\rightarrow }I_2)\in \sqRMatch{p_3}{p_2}\\
	{\color{h1color}\mathbf{m}_{(32)1}}
		&=(O_{32}{\color{h1color}\leftarrow M_{(32)1}\rightarrow }I_{1})\in \sqRMatch{p_{32}}{p_1}\,,\qquad p_{32}=\sqComp{p_3}{{\color{h1color}\mathbf{m}_{32}}}{p_2}\,,
\end{aligned}
\end{equation}
and such that the property described in~\eqref{eq:THMassoc} holds. We begin by forming the SqPO-composite rule $p_{3(21)}=\sqComp{p_3}{\mathbf{m}_{3(21)}}{p_{21}}$, which results in the diagram
\begin{equation}\label{eq:AssocSqPOForwardStep1}
\vcenter{\hbox{\includegraphics[scale=0.5,page=1]{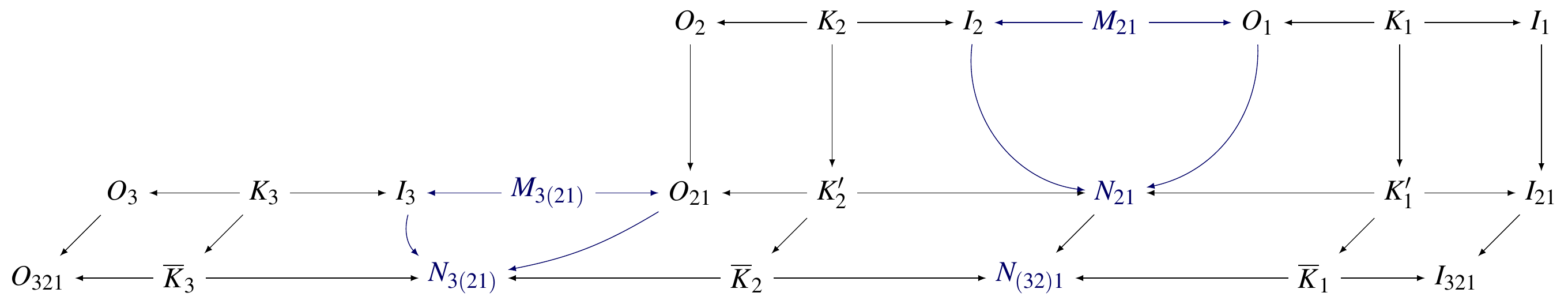}}}
\end{equation}
by virtue of invoking SqPO-composition twice. For the remainder of the proof, it is very important to precisely determine the nature of each of the squares in this diagram:
\begin{itemize}
	\item To clarify the structure of the rightmost four squares at the bottom, consider the setting presented in~\eqref{eq:CTa1}: since in the definition of the SqPO-composition as presented in~\eqref{eq:SqPOccomp} the nature of the squares to the right is that of pushout complement and pushout, respectively, it may be verified that applying the analysis procedure to the diagram in~\eqref{eq:CTa1} with thus the ``curvy'' front and right bottom faces both pushouts, one eventually arrives (by virtue of pushout-pushout and pushout-pullback decomposition) at the setting depicted in~\eqref{eq:CTa3} with all squares in the bottom row being pushouts. Thus the rightmost four squares at the bottom of~\eqref{eq:AssocSqPOForwardStep1} are all pushouts.
	\item By virtue of the definition of SqPO-composition, all vertical squares in the back of~\eqref{eq:CTa3} are pushouts, except for the square $\cSquare{K_2,K_2',{\color{h1color}N_{21}},I_2}$, which is an FPC. Analogously, the bottom leftmost three squares are (in order from left to right) a pushout, an FPC and a pushout.
\end{itemize}

Constructing the pullback ${\color{h1color}M_{32}}=\pB{{\color{h1color}M_{3(21)}\rightarrow}O_{21}\leftarrow O_2}$ (which by composition of pullbacks also leads to an arrow ${\color{h1color}M_{32}\rightarrow}I_3$) and forming the three additional vertical squares on the far left in the evident fashion in the diagram below
\begin{equation}
\vcenter{\hbox{\includegraphics[scale=0.5,page=2]{images/newAssocProof.pdf}}}
\end{equation}
allows us to construct ${\color{h1color}N_{32}}=\pO{I_3{\color{h1color}\leftarrow M_{32}\rightarrow}O_2}$, which in turn via universality of pushouts uniquely induces an arrow ${\color{h1color}N_{32}\rightarrow N_{3(21)}}$:
\begin{equation}
\vcenter{\hbox{\includegraphics[scale=0.5,page=3]{images/newAssocProof.pdf}}}
\end{equation}
Here, the rightmost three squares on the top are formed in the evident fashion (and they are pushouts by virtue of Lemma~\ref{lem:Main}\eqref{lem:fPOPB}), while the other arrows of the above diagram are constructed as follows: 
\begin{equation}
\begin{aligned}
	K_3'&=\pB{\overline{K}_3\rightarrow{\color{h1color}N_{3(21)}\leftarrow N_{32}}}\,,\quad & & 
	O_{32}&=\pO{K_3'\leftarrow K_3\rightarrow O_3}\\
	K_2''&=\pB{{\color{h1color}N_{32}\rightarrow N_{3(21)}\leftarrow}\overline{K}_2}\,,\quad & & 
	I_{32}&=\pO{K_2''\leftarrow K_2\rightarrow I_2}
\end{aligned}
\end{equation}
Invoking pushout-pullback, pushout-pushout and vertical FPC-pullback  decompositions, it may be verified that (describing positions of front and top square pairs by the position of the respective front square, from left to right)
\begin{itemize}
	\item the leftmost front and top squares are pushouts,
	\item the second from the left front and top squares are FPCs,
	\item the third from the left front and top squares are pushouts,
	\item in the next adjacent pair, the front square is an FPC and the top square a pushout,
	\item the second from the right front and top squares are pushouts, and
	\item the rightmost front and top squares are pushouts.
\end{itemize}

Defining the pullback object ${\color{h1color}M_{(32)1}}=\pB{I_{32}{\color{h1color}\rightarrow N_{3(21)}\leftarrow}O_1}$, thus inducing an arrow ${\color{h1color}M_{21}\rightarrow M_{3(21)}}$,
\begin{equation}
\vcenter{\hbox{\includegraphics[scale=0.5,page=4]{images/newAssocProof.pdf}}}
\end{equation}
it remains to verify that the square $\cSquare{{\color{h1color}M_{3(21)}},I_{32},{\color{h1color}N_{3(21)}},O_1}$ is not only a pullback, but also a pushout square. This part of the proof requires a somewhat intricate diagram chase; since the required arguments are identical\footnote{More precisely, the only difference in the structure of the relevant sub-diagram compared to the DPO case resides in the two FPC squares in the front and back in fourth position from the left (which happen to be pushout squares in the corresponding DPO-type proof), but the structure of this part of the diagram is not explicitly used in the proof in the DPO variant of the theorem, whence the claim follows.} to the ``$\Rightarrow$'' part of proof of DPO-type associativity as presented in~\cite{bp2019-ext}, we omit this part of the proof here in the interest of brevity.\\

It thus remains to prove the claim in the ``$\Leftarrow$'' direction, i.e.\ starting from the set of data
\begin{equation}
\begin{aligned}
	{\color{h1color}\mathbf{m}_{32}}
		&=(O_3{\color{h1color}\leftarrow M_{32}\rightarrow }I_2)\in \sqRMatch{p_3}{p_2}\\
	{\color{h1color}\mathbf{m}_{(32)1}}
		&=(O_{32}{\color{h1color}\leftarrow M_{(32)1}\rightarrow }I_{1})\in \sqRMatch{p_{32}}{p_1}\,,\qquad p_{32}=\sqComp{p_3}{{\color{h1color}\mathbf{m}_{32}}}{p_2}\,,
\end{aligned}
\end{equation}
we need to demonstrate that one may uniquely (up to isomorphisms) construct from this information a pair of admissible matches
\begin{equation}
\begin{aligned}
	{\color{h1color}\mathbf{m}_{21}}
		&=(O_2{\color{h1color}\leftarrow M_{21}\rightarrow }I_1)\in \sqRMatch{p_2}{p_1}\\
	{\color{h1color}\mathbf{m}_{3(21)}}
		&=(O_3{\color{h1color}\leftarrow M_{3(21)}\rightarrow }I_{21})\in \sqRMatch{p_3}{p_{21}}\,,\qquad p_{21}=\sqComp{p_2}{{\color{h1color}\mathbf{m}_{21}}}{p_1}\,,
\end{aligned}
\end{equation}
and such that the property described in~\eqref{eq:THMassoc} holds. We begin by forming the SqPO-composite rule $p_{(32)1}=\sqComp{p_{32}}{\mathbf{m}_{(32)1}}{p_{1}}$, which results in the diagram
\begin{equation}\label{eq:AssocSqPOReverseStep1}
\vcenter{\hbox{\includegraphics[scale=0.5,page=1]{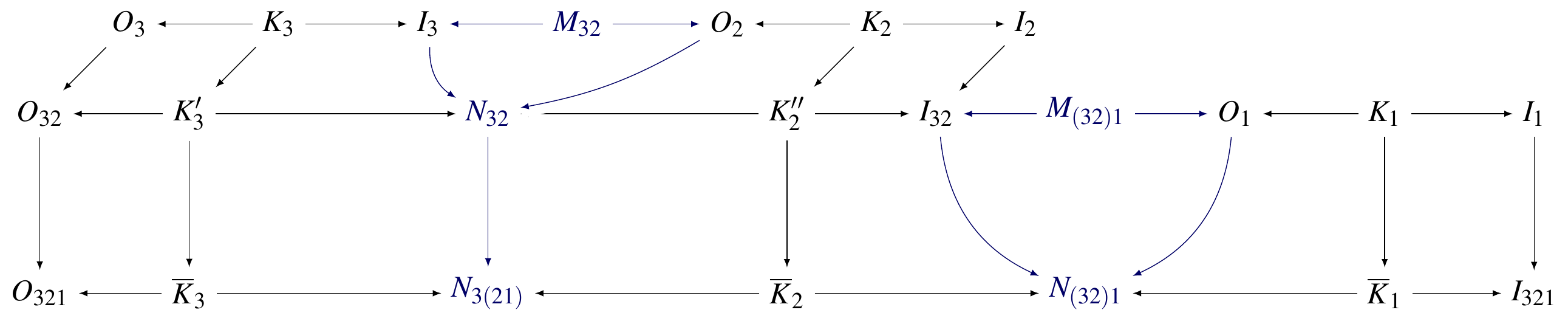}}}
\end{equation}
by virtue of invoking SqPO-composition twice. A careful inspection of the definition of the SqPO-composition and of the analysis part of the SqPO concurrency theorem permit to verify that the nature of all squares thus constructed coincides precisely with the nature of the corresponding squares in the ``$\Rightarrow$'' part of the proof.

Constructing the pullback ${\color{h1color}M_{21}}=\pB{I_2\rightarrow I_{32}\leftarrow\color{h1color}M_{(32)1}}$ (which by composition of pullbacks also leads to an arrow ${\color{h1color}M_{21}\rightarrow}O_1$) and forming the three additional vertical squares on the far right in the evident fashion in the diagram below
\begin{equation}
\vcenter{\hbox{\includegraphics[scale=0.5,page=2]{images/newAssocProof-SqPO-ReverseStep.pdf}}}
\end{equation}
allows us to construct ${\color{h1color}N_{21}}=\pO{I_2{\color{h1color}\leftarrow M_{21}\rightarrow}O_1}$, which in turn via the universal property of pushouts induces an arrow ${\color{h1color}N_{21}\rightarrow N_{(32)1}}$:
\begin{equation}
\vcenter{\hbox{\includegraphics[scale=0.5,page=3]{images/newAssocProof-SqPO-ReverseStep.pdf}}}
\end{equation}
The remaining new squares of the above diagram are constructed as follows:
\begin{equation}
\begin{aligned}
K_2'&=\pB{\overline{K}_2\rightarrow {\color{h1color}N_{(32)1}}\leftarrow {\color{h1color}N_{21}}} & & 
O_{21}&=\pO{O_2\leftarrow K_2\rightarrow K_2'}\,.
\end{aligned}
\end{equation}
Moreover, by virtue of vertical FPC composition, the square %
$\cSquare{K_3,\overline{K}_3,{\color{h1color}N_{3(21)}},I_3}$ is an FPC, while via pushout composition the square $\cSquare{K_3,\overline{K}_3,O_{321},O_3}$ is a pushout.\\

Again, the nature of all squares constructed thus far coincides precisely with the structure as presented in the ``$\Rightarrow$'' part of the proof, with one notable exception: by virtue of vertical FPC-pullback decomposition, we may only conclude that the square %
$\cSquare{K_2',\overline{K}_2,{\color{h1color}N_{(32)1}},{\color{h1color}N_{21}}}$ is an FPC (but at this point we do \emph{not} know whether it is also a pushout as in the analogous part of the diagram in the ``$\Rightarrow$'' part of the proof). However, an auxiliary calculation demonstrates that this square is in fact a pushout in disguise --- consider the following ``splitting'' of the relevant sub-part of the diagram as shown below:
\begin{equation}
\vcenter{\hbox{\includegraphics[scale=0.6,page=1]{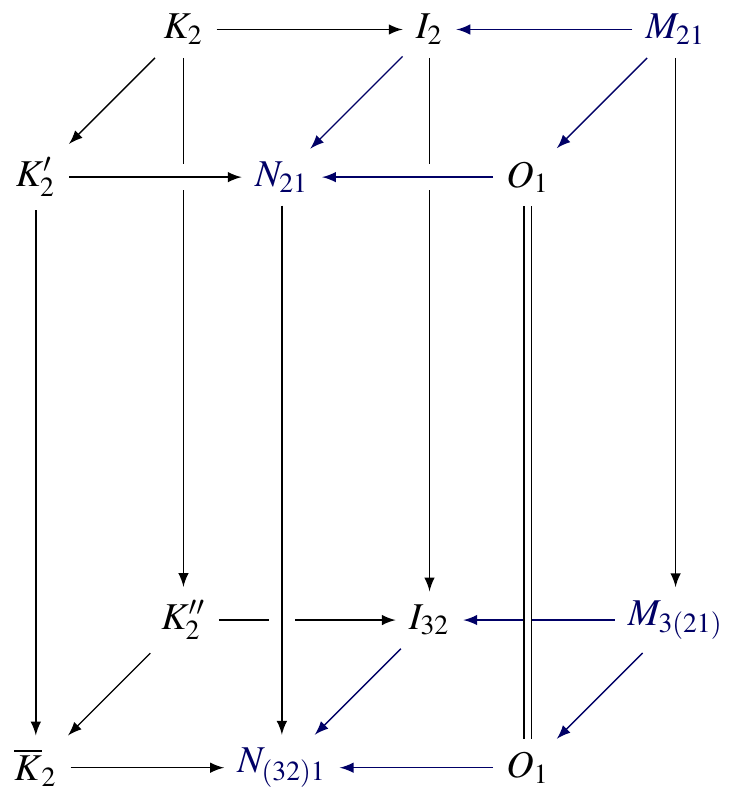}}}
\quad \rightsquigarrow\quad  \vcenter{\hbox{\includegraphics[scale=0.675,page=2]{images/new-assoc-proof-SqPO-reverse-Aux.pdf}}}
\end{equation}
The precise steps are as follows: the front left square in the diagram above left is an FPC; thus if one takes the pushout ${\color{h1color}N_{(32)1}'}=\pO{\overline{K}_2\leftarrow K_2'\rightarrow {\color{h1color}N_{21}}}$ as well as in the bottom back the pushout along the isomorphism of $K_2''$ as displayed (yielding the arrows in the middle), followed by taking the pullback $O_1'=\pB{{\color{h1color}N_{(32)1}'},{\color{h1color}N_{(32)1}}, O_1}$ (which entails that the arrow ${\color{h1color}M_{3(21)}}\rightarrow O_1'$ exists), it is straightforward to verify that $O_1'\cong O_1$. Invoking the van Kampen property (recalling that by definition the right square on the bottom is a pushout), we find that $\cSquare{{\color{h1color}M_{3(21)}},O_1',{\color{h1color}N_{(32)1}'},I_{32}}$ is a pushout. Thus by pushout-pushout decomposition, the square $\cSquare{O_1',O_1,{\color{h1color}N_{(32)1}},{\color{h1color}N_{(32)1}'}}$ is a pushout, whence ${\color{h1color}N_{(32)1}'}\cong {\color{h1color}N_{(32)1}}$. This in summary entails\footnote{Coincidentally, at this point we are back into full structural analogy to the ``$\Rightarrow$'' part of the proof, a necessary prerequisite for completing this part of the proof as it will turn out.} that the square $\cSquare{K_2',\overline{K}_2,{\color{h1color}N_{(32)1}},{\color{h1color}N_{21}}}$ is not only an FPC, but in fact also a pushout.\\

Back to the main proof, defining the pullback object 
\[
{\color{h1color}M_{3(21)}}=\pB{I_{3}{\color{h1color}\rightarrow N_{(32)1}\leftarrow}O_{21}}\,,
\]
thus inducing an arrow ${\color{h1color}M_{32}\rightarrow M_{3(21)}}$,
\begin{equation}
\vcenter{\hbox{\includegraphics[scale=0.5,page=4]{images/newAssocProof-SqPO-ReverseStep.pdf}}}
\end{equation}
it remains to verify that the square $\cSquare{{\color{h1color}M_{3(21)}},O_{21},{\color{h1color}N_{3(21)}},I_3}$ is not only a pullback, but also a pushout square. Let us construct the auxiliary diagram as depicted in Figure~\ref{fig:SqPOrevAuxDiag}, with objects obtained via taking suitable pullbacks as indicated. The four cubes that are drawn separately are the top, back, bottom and front cubes induced via the newly constructed arrows, and are oriented such that one may easily apply the van Kampen property in the next step of the proof (which in most cases requires a suitable 3d-rotation).
\afterpage{%
  \begin{landscape}
    \begin{figure}%
    \centering
\includegraphics[scale=0.6]{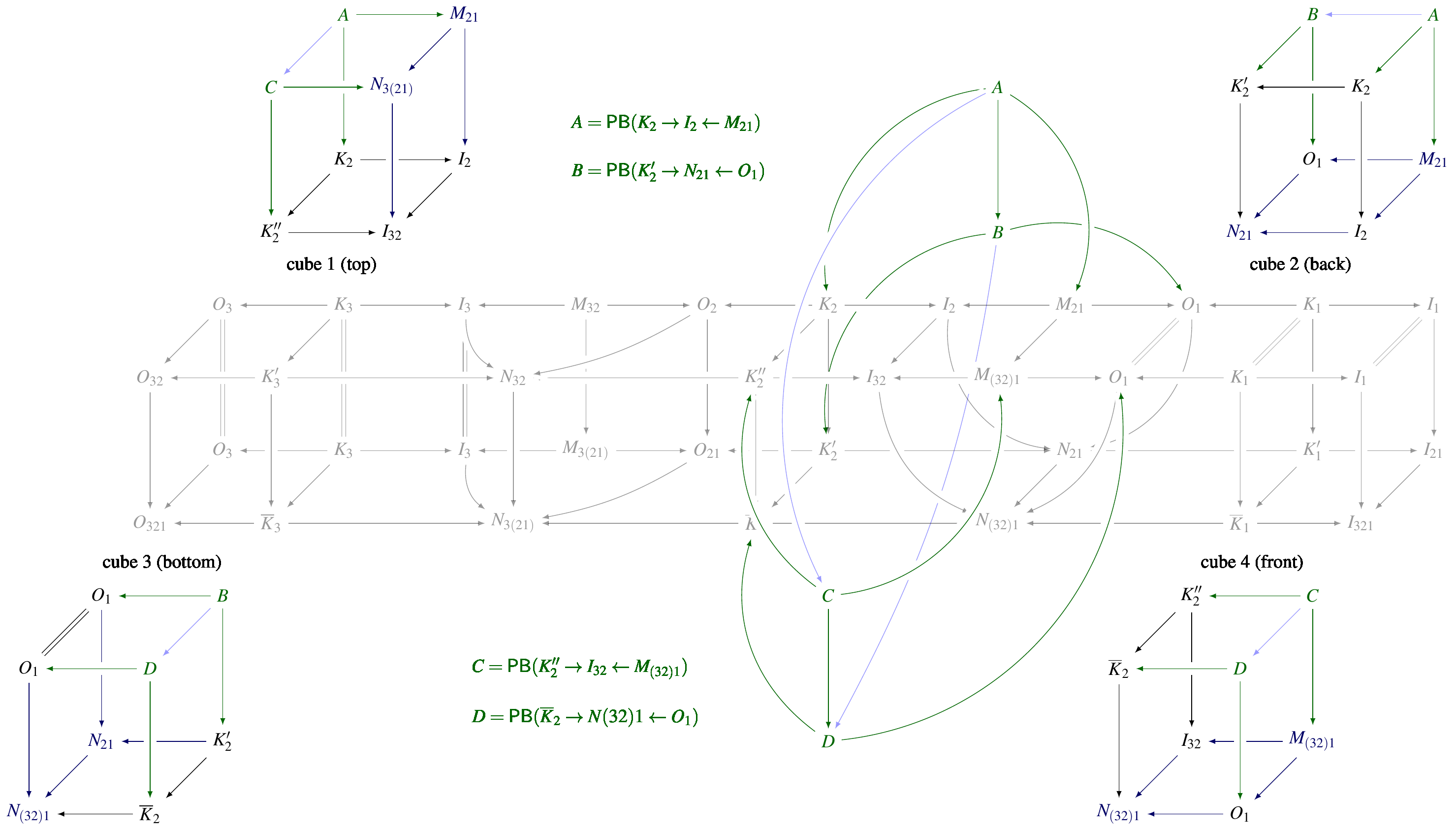}
\caption{\label{fig:SqPOrevAuxDiag}Auxiliary diagram for the second part of the SqPO associativity proof.}
    \end{figure}
  \end{landscape}%
}

Invoking pullback-pullback decomposition and the van Kampen property repeatedly, it may be verified that in the relevant sub-diagram as presented below
\begin{equation}
\vcenter{\hbox{\includegraphics[scale=0.75,page=1]{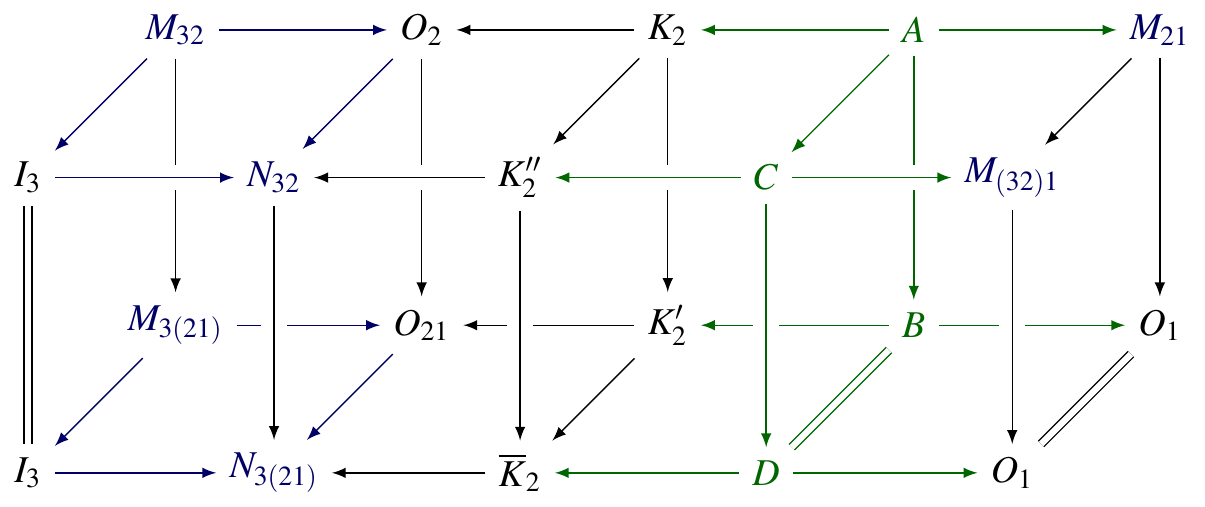}}}
\end{equation}
we find the following structure of the squares:
\begin{itemize}
	\item All squares on the top are pushouts, except the second one from the right (which is a pullback).
	\item The second and third square from the left in the back of the diagram are pushouts, the other two back squares are pullbacks, with the same structure for the front squares.
	\item Counting from left to right, the second and fourth square on the bottom are pushouts, the other two are pullbacks.
\end{itemize}
In particular, as indicated this entails that $D\cong B$. We proceed by performing the following ``splitting'' of the diagram:
\begin{equation}
\vcenter{\hbox{\includegraphics[scale=0.75,page=2]{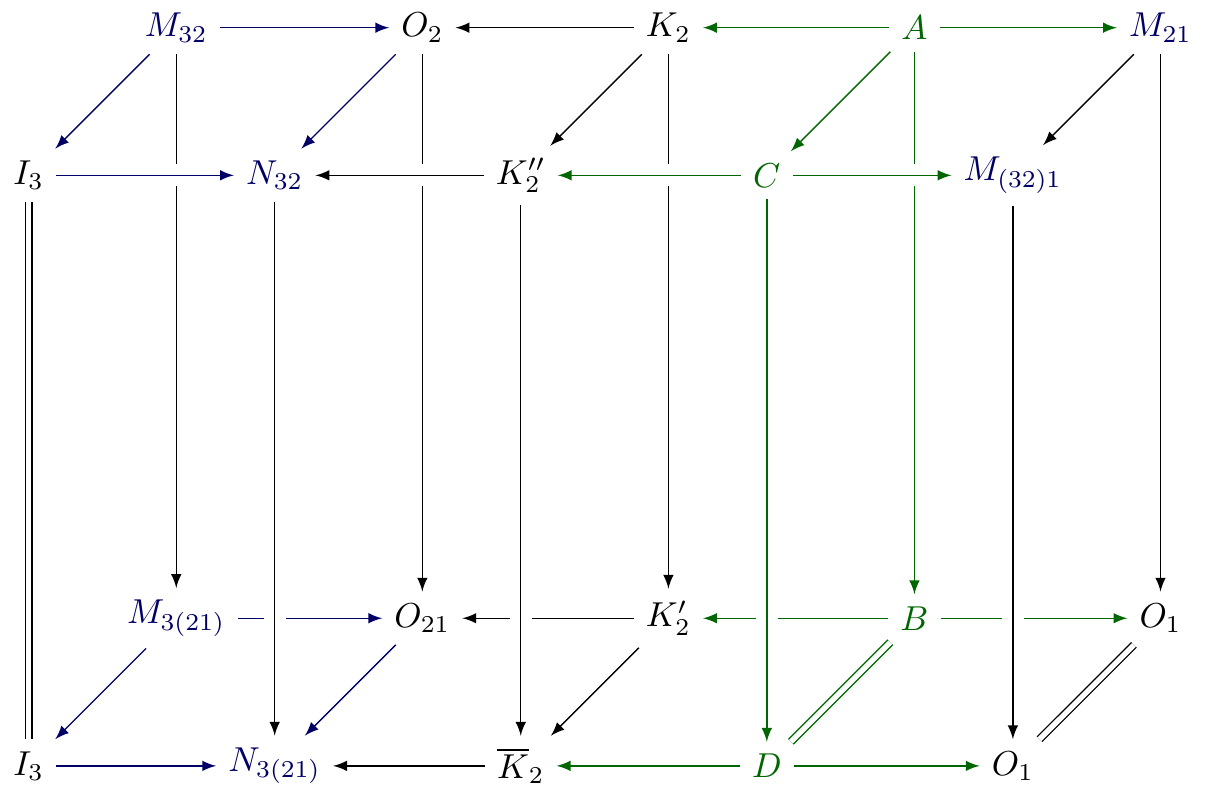}}}
\end{equation}
Start the ``splitting'' via taking the pushouts ${\color{h1color}N_{32}'}=\pO{I_3\leftarrow I_3\rightarrow {\color{h1color}N_{32}}}$ (which entails that ${\color{h1color}N_{32}'}\cong {\color{h1color}N_{32}}$) and $O_2'=\pO{{\color{h1color}M_{3(21)}}\leftarrow {\color{h1color}M_{32}}\rightarrow O_2}$. By pullback-pullback decomposition followed by pushout-pullback decomposition, we may conclude that the resulting square $\cSquare{O_2',{\color{h1color}N_{32}'},I_3,{\color{h1color}M_{3(21)}}}$ is a pushout. Note also that all vertical squares in the bottom left part of the diagram thus constructed are pullbacks (by virtue of suitable pullback decompositions).

Next, construct the pullbacks $\overline{K}_2'=\pB{{\color{h1color}N_{32}'\rightarrow N_{3(21)}}\leftarrow \overline{K}_2}$ and $K_2'''=\pB{O_2'\rightarrow O_{21}\leftarrow K_2'}$. By pushout-pullback decomposition, in the diagram on the right the top and bottom front squares and back squares in the second column are pushouts, while the square $\cSquare{K_2''',K_2',\overline{K}_2,\overline{K}_2'}$ is a pullback. Performing the precise same steps in the next column, i.e.\ via taking the pullbacks $D'=\pB{\overline{K}_2'\rightarrow \overline{K}_2\leftarrow D}$ and $B'=\pB{K_2'''\rightarrow K_2'\leftarrow B}$, we obtain via pushout-pullback decomposition that the top and bottom front and top and bottom back squares in the third column are pushouts, while the square $\cSquare{B',B,D,D'}$ is a pullback. But since $D\cong B$ and since isomorphisms are stable under pullback, we conclude that $D'\cong B'$, and thus that the square $\cSquare{B',B,D,D'}$ is in fact a pushout. 

By pushout composition, we may thus conclude that $\cSquare{K_2''',K_2',\overline{K}_2,\overline{K}_2'}$ is a pushout, whence the square $\cSquare{O_2',O_{21},{\color{h1color}N_{3(21)}},{\color{h1color}N_{32}'}}$ is a pushout, which finally allows us to verify that $\cSquare{{\color{h1color}M_{3(21)}},O_{21},{\color{h1color}N_{3(21)}},I_3}$ is a pushout. This concludes the proof of the SqPO-type associativity theorem.
\end{proof}

\subsection{Proof of the theorem on SqPO rule algebra properties}\label{app:SqPOraProps}

\begin{proof}
Associativity of $\odot_{\cR_{\bfC}}$ follows from the associativity of the operation $\sqComp{.}{.}{.}$ proved in Theorem~\ref{thm:SqPOassoc}. The claim that $R_{\inObj}=\delta(\inObj\leftarrow\inObj\rightarrow \inObj)$ is the unit element of the rule algebra $\cR^{sq}_{\bfC}$ follows directly from the definition of the rule algebra product for $R_{\inObj}\odot_{\cR_{\bfC}}R$ and $R\odot_{\cR_{\bfC}}R_{\inObj}$ for $R\in \cR^{sq}_{\bfC}$. More concretely, we present below the category-theoretic composition calculation that underlies the equation $R_{\inObj}\odot_{\cR_{\bfC}}R=R$: 
\begin{equation}\gdef\mycdScale{0.85}
\begin{mycd}
			\inObj\ar[d] & 
			\inObj 
				\ar[l]\ar[r]\ar[d]
				\ar[dl,phantom,"\mathsf{PO}"] & 
			\inObj 
				\ar[dr,h1color,bend right]\ar[dl,phantom,"\mathsf{FPC}"] & 
			{\color{h1color}\inObj}
				\ar[l,h1color]\ar[r,h1color]\ar[d,h1color,phantom,"\mathsf{PO}"] & 
			O \ar[dl,h1color,bend left]\ar[dr,phantom,"\mathsf{POC}"]&
			K \ar[l,"o"']\ar[r,"i"]\ar[d]
				\ar[dr,phantom,"\mathsf{PO}"] & 
			I\ar[d]\\
			{\color{h2color}O} & 
			O\ar[l]\ar[rr] & & 
			{\color{h1color}O} & &
			K\ar[ll,"o"]\ar[r,"i"'] & {\color{h2color}I}\\
		\end{mycd}
\end{equation}
Here, it is important to note that the pushout complement used to construct the square marked $\mathsf{POC}$ always exists (see Lemma~\ref{lem:Main}\eqref{lem:fPOPB}), whence the claim follows.
\end{proof}

\subsection{Proof of the SqPO canonical representation theorem}\label{app:SqPOcanrep}

\begin{proof}
In order for $\rho^{sq}_{\bfC}$ to qualify as an algebra homomorphism (of unital associative algebras $\cR^{sq}_{\bfC}$ and $End(\hat{\bfC})$), we must have (with $R_{\inObj}=\delta(p_{\inObj})$, $p_{\inObj}=(\inObj\leftarrow \inObj\rightarrow \inObj)$)
\begin{align*}
&(i)\; \rho^{sq}_{\bfC}(R_{\inObj})=\mathbb{1}_{End(\hat{\bfC})}\\
&(ii)\;\forall R_1,R_2\in \cR^{sq}_{\bfC}:\; \rho^{sq}_{\bfC}(R_1\odot_{\cR_{\bfC}}R_2)=\rho^{sq}_{\bfC}(R_1)\rho^{Sq}_{\bfC}(R_1)\,.
\end{align*}
Due to linearity, it suffices to prove the two properties on basis elements $\delta(p),\delta(q)$ of $\cR^{sq}_{\bfC}$ (for $p,q\in \Lin{\bfC}$) and on basis elements $\ket{X}$ of $\hat{\bfC}$. Property $(i)$ follows directly from the definition,
\[
\forall X\in \obj{\bfC}_{\cong}:\quad \rho^{sq}_{\bfC}(R_{\inObj})\ket{X}\overset{\eqref{eq:canRepSqPO}}{=}\sum_{m\in \sqMatch{r_{\inObj}}{X}}\ket{(r_{\inObj})_m(X)}=\ket{X}\,.
\]
Property $(ii)$ follows from Theorem~\ref{thm:SqPOconcur} (the SqPO-type concurrency theorem): for all basis elements $\delta(p),\delta(q)\in \cR^{sq}_{\bfC}$ (with $p,q\in Lin(\bfC)$) and for all $X\in \obj{\bfC}$,
\begin{align*}
\rho^{sq}_{\bfC}\left(\delta(q)\odot_{\bfC}\delta(p)\right)\ket{X}
&\overset{\eqref{eq:defRcompSqPO}}{=}
\sum_{\mathbf{d}\in \sqRMatch{q}{p}}
\rho^{sq}_{\bfC}
	\left(
		\delta\left(
			\sqComp{q}{\mathbf{d}}{p}
			\right)
		\right)\ket{X}\\
&\overset{\eqref{eq:canRepSqPO}}{=}
\sum_{\mathbf{d}\in \sqRMatch{q}{p}}\;
\sum_{e\in \sqMatch{r_{\mathbf{d}}}{X}}
\ket{(r_{\mathbf{d}})_e(X)}
 \quad &(r_{\mathbf{d}}=\sqComp{q}{\mathbf{d}}{p})\\
&=
\sum_{m\in \sqMatch{p}{X}}
\sum_{n\in\sqMatch{q}{p_{m}(X)}}
\ket{q_n(p_m(X))} &\text{(via Thm.~\ref{thm:SqPOconcur})}\\
&\overset{\eqref{eq:canRepSqPO}}{=}
\sum_{m\in \sqRMatch{p}{X}}
\rho^{sq}_{\bfC}\left(\delta(q)\right)\ket{p_m(X)}\\
&\overset{\eqref{eq:canRepSqPO}}{=}
\rho^{sq}_{\bfC}
\left(\delta(q)\right)
\rho^{sq}_{\bfC}\left(\delta(p)\right)\ket{X}\,.
\end{align*}%
\end{proof}

\subsection{Proof of the SqPO stochastic mechanics framework theorem}\label{app:StochMechProof}\label{sec:appProofSMF}

\begin{proof}
By definition, the SqPO-type canonical representation of a generic rule algebra element $(\grule{O}{p}{I})\in \cR_{\bfC}$ is a row-finite linear operator, since by virtue of the finitarity of objects according to Assumption~\ref{ass:RAsqpo}, for every object $X\in \obj{\bfC}$ the set of SqPO-admissible matches $\sqMatch{p}{X}$ of the associated linear rule $p=(O\xleftarrow{o}K\xrightarrow{i}I)$ is finite. We may thus verify that the linear operator $H$ possesses all the required properties of a so-called $Q$-matrix (or infinitesimal generator) of a CTMC~\cite{norris,Anderson_1991}, i.e.\ its non-diagonal entries are non-negative, its diagonal entries are finite, and furthermore the row sums of $H$ are zero (whence $H$ constitutes a conservative and stable $Q$-matrix; compare~\eqref{def:Hprops} of Definition~\ref{def:CTMCs}). It is crucial to note that while originally $H$ as a linear combination of representations of rule algebra elements is only defined to act on \emph{finite} linear combinations of basis vectors $\ket{X}$ of $\hat{\bfC}$, an important mathematical result from the theory of CTMCs entails that if a row-finite linear operator such as $H$ is a stable and conservative $Q$-matrix, it extends to a linear operator on \emph{infinitely supported distributions} (here over basis vectors of $\hat{\bfC}$) with finite real coefficients (see e.g.\ \cite{Anderson_1991}, Chapters~1 and~2). Moreover, the property $\bra{}H=0$ follows directly from the defining equations~\eqref{def:Hprops} of Definition~\ref{def:CTMCs}.

Let us prove next the claim on the precise structure of observables. Recall that according to Definition~\ref{def:obs}, an observable $O\in \cO_{\bfC}$ must be a linear operator in $End(\cS_{\bfC})$ that acts diagonally on basis states $\ket{X}$ (for $X\in \obj{\bfC}_{\cong}$), whence that satisfies for all $X\in \obj{\bfC}_{\cong}$
\[
O\ket{X}=\omega_O(X)\ket{X}\quad (\omega_O(X)\in \bR)\,.
\]
Comparing this equation to the definition of the SqPO-type canonical representation (Definition~\ref{def:canRepSqPO}) of a generic rule algebra basis element $\delta(p)\in \cR^{sq}_{\bfC}$ (for $p\equiv(O\xleftarrow{o}K\xrightarrow{i}I)\in \Lin{\bfC}$),
\[
\rho^{sq}_{\bfC}(\delta(p))\ket{X}:=\begin{cases}
\sum_{m\in \sqMatch{p}{X}}\ket{p_m(X)}\quad &\text{if }\sqMatch{p}{X}\neq \emptyset\\
0_{\hat{\bfC}}&\text{else,}
\end{cases}
\]
we find that in order for $\rho^{sq}_{\bfC}(\delta(p))$ to be diagonal we must have
\[
 \forall X\in \obj{\bfC}:\forall m\in \sqMatch{p}{X}:\quad p_m(X)\cong X\,.
\]
But by definition of SqPO-type derivations of objects along admissible matches (Definition~\ref{def:SqPOr}), the only linear rules $p\in \Lin{\bfC}$ that have this special property are precisely the rules of the form
\[
p_{id_M}= (M\xleftarrow{id_M}M\xrightarrow{id_M}M)\,.
\]
In particular, defining $O_M^{sq}:=\rho^{sq}_{\bfC}(\delta(p_{id_M}))$, we find that the eigenvalue $\omega_{O_M^{sq}}(X)$ coincides with the cardinality of the set $\sqMatch{p_{id_M}}{X}$ of SqPO-admissible matches,
\[
\forall X\in ob(\bfC):\quad  O_M^{sq}\ket{X}=|\sqMatch{p_{id_M}}{X}|\cdot\ket{X}\,.
\]
This proves that the operators $O^{sq}_M$ form a basis of diagonal operators on $End(\hat{\bfC})$ (and thus on $End(\cS_{\bfC})$) that can arise as linear combinations of representations of rule algebra elements. 

To prove the jump-closure property, note that it follows from Definition~\ref{def:SqPOr} that for an arbitrary \emph{linear} rule $p\equiv(O\xleftarrow{o}K\xrightarrow{i}I)\in \Lin{\bfC}$, a generic object $X\in \obj{\bfC}$ and a monomorphism $m:I\rightarrow X$, $m$ is according to Definition~\ref{def:SqPOr} both a match of the rule $p$ as well as of the rule $p_{id_I}$. Evidently, the application of the rule $p$ to $X$ along the match $m$ produces an object $p_m(X)$ that is in general different from the object $p_{id_{I_m}}(X)$ produced by application of the rule $p_{id_I}$ to $X$ along the match $m$. But by definition of the projection operator $\bra{}$ (Definition~\ref{def:obs}),
\[
\forall X\in \obj{\bfC}_{\cong}:\quad \braket{}{X}:=1_{\bR}\,,
\]
we find that
\[
	\braket{}{p_m(X)}=\braket{}{p_{id_{I_m}}(X)}=1\,,
\]
whence we may prove the claim of the SqPO-type jump-closure property via verifying it on arbitrary basis elements (with notations as above):
\[
\bra{}\rho^{sq}_{\bfC}(\delta(p))\ket{X}
=|\sqMatch{p}{X}|=|\sqMatch{p_{id_I}}{X}|
=\bra{}\rho^{sq}_{\bfC}(\delta(p_{id_I}))\ket{X}\,.
\]
Since $X\in \obj{\bfC}_{\cong}$ was chosen arbitrarily, we thus have indeed that
\[
\bra{}\rho^{sq}_{\bfC}(\delta(p))=\bra{}\rho^{sq}_{\bfC}(\delta(p_{id_I}))\,.
\]
This concludes the proof that our definition of continuous-time Markov chains based upon SqPO-type rewriting rules is well-posed and yields all the requisite properties.
\end{proof}

\end{document}